\newcommand{\bv}[1]{\mathbf{#1}}
\newtheorem{theorem}{Theorem}[section]
\newtheorem{definition}{Definition}
\newtheorem{lemma}[theorem]{Lemma}
\newcommand{\E}{\mathbb{E}}
\newcommand{\BO}{\mathcal{O}}
\begin{document}

\title{Distributed House-Hunting in Ant Colonies}
\author{
Mohsen Ghaffari
\and
Cameron Musco
\and
Tsvetomira Radeva
\and
Nancy Lynch
\and
 \\ \{ghaffari, cnmusco, radeva, lynch\}@csail.mit.edu, MIT
}
\date{}
\maketitle

\begin{abstract}
We introduce the study of the \emph{ant colony house-hunting} problem from a distributed computing perspective. When an ant colony's nest becomes unsuitable due to size constraints or damage, the colony must relocate to a new nest. The task of  identifying and evaluating the quality of potential new nests is distributed among all ants. The ants must additionally reach consensus on a final nest choice and the full colony must be transported to this single new nest. Our goal is to use tools and techniques from distributed computing theory in order to gain insight into the house-hunting process.

We develop a formal model for the house-hunting problem inspired by the behavior of the \emph{Temnothorax} genus of ants. We then show a $\Omega(\log n)$ lower bound on the time for all $n$ ants to agree on one of $k$ candidate nests. We also present two algorithms that solve the house-hunting problem in our model. The first algorithm solves the problem in optimal $\BO(\log n)$ time but exhibits some features not characteristic of natural ant behavior. The second algorithm runs in $\BO(k \log n)$ time and uses an extremely simple and natural rule for each ant to decide on the new nest.
\end{abstract}

\section{Introduction}
\label{sec:intro}

Some recent work in distributed computing theory has focused on biological problems inspired by algorithmic tasks carried out by ant colonies; for example,  \cite{emek2014solving, feinerman2012memory, feinerman2012collaborative, lenzen2014trade} study collaborative food foraging and \cite{cornejo2014task} models and proposes algorithms for task allocation within a colony.  
Often computer scientists study biologically-inspired algorithms with the aim of engineering better computing systems.
An alternative line of work uses tools developed to analyze and design distributed computer systems to better understand the behavior of biological systems \cite{feinerman2013theoretical}.
In this paper, we follow the second approach in an attempt to better understand the house-hunting behavior of the \emph{Temnothorax} ant. The essence of this process is a collaborative search, consensus decision, and relocation of the entire ant colony to a new home, with an emphasis on the decision making component.

We first model the ant colony and its computational constraints, and formally define an algorithmic problem that it solves. 
The challenge is to develop a model that is simultaneously tractable to theoretical analysis and close enough to reality to give meaningful insights into ant behavior.
For example: Is ant behavior in some sense optimal given their biological constraints?  
One hypothesis may be that behavior has been optimized through evolution. Lower bounds matching the performance of algorithms seen in nature can provide mathematical support for this hypothesis. 
Alternatively, results may show that behavior is far from optimal, suggesting the existence of hidden constraints or goals yet to be fully understood. Can we identify why certain behavioral patterns have developed and what environmental and biological constraints they are adaptations to?


One motivation for studying the house-hunting process of \emph{Temnothorax} ants is that it has received significant attention from biologists. Various aspects of the process have been studied, including ants' preferences and capabilities \cite{healey2008effect, sasaki2013ants}, the general structure of the algorithms used by the ants \cite{mallon2001individual, pratt2002quorum}, variations due to colony size \cite{franks2006decision}, and trade-offs between the speed and accuracy of the process \cite{pratt2006tunable}.
This work forms a wide basis of scientifically tested hypotheses that we can use for the foundation of our model and inspiration for our algorithms.

From a distributed computing perspective, house-hunting is closely related to the fundamental and well-studied problem of
consensus \cite{fischer1985impossibility, lamport1998part}.
This makes the problem conceptually different from other ant colony inspired problems studied by computer scientists. 
Task allocation and foraging are both intrinsically related to parallel optimization. The main goal is to divide work optimally amongst a large number of ants 
in a process similar to load balancing. This is commonly achieved using random allocation or \emph{negative feedback} \cite{astrom2010feedback} against work that has already been completed.
In contrast, the house-hunting problem is a decision problem in which all ants must converge to the same choice. Both in nature and in our proposed algorithms, this is achieved through \emph{positive feedback} \cite{astrom2010feedback}, by reinforcing viable nest candidates until a single choice remains. At a high level, our work is related to previous work on rumor spreading in biological populations \cite{feinerman2013efficient}.

\subsection{The House-Hunting Process}
\label{sec:hh}
\emph{Temnothorax} ants live in fragile rock crevices that are frequently destroyed. It is crucial for colony survival to quickly find and move to a new nest after their home is compromised. This process is highly distributed and involves several stages of searching for nests, assessing nest candidates, recruiting other ants to do the same, and finally, transporting the full colony to the new home.

In the search phase, some ants begin searching their surroundings for possible new nests. Experimentally, this phase has not been studied much; it has been assumed that ants encounter candidate nests fairly quickly through random walking. In the assessment phase, each ant that arrives at a new nest evaluates it based on various criteria, e.g., whether the nest interior is dark and therefore likely free of holes, and whether the entrance to the nest is small enough to be easily defended. These criteria may have different priorities \cite{healey2008effect, sasaki2013ants} and, in general, it is assumed that nest assessments by an individual ant are not always precise or rational \cite{sasaki2011emergence}. After some time spent assessing different nests, going back to the old nest and searching for new nests, an ant becomes sufficiently satisfied with some nest and moves on to the recruitment phase, which consists of \emph{tandem runs} -- one ant leading another ant from the old to a new nest. The recruited ant learns the candidate nest location and can assess the nest itself and begin performing tandem runs if the nest is acceptable. 

At this point many nest sites may have ants recruiting to them, so a decision has to be made in favor of one nest. The ants must solve the classic distributed computing problem of consensus. One strategy that ants are believed to use is a \emph{quorum threshold} \cite{pratt2005quorum, pratt2002quorum} -- a  threshold of the number of ants in a candidate nest, that, when exceeded, indicates that the nest should be chosen as the new home. Each time an ant returns to the new nest, it evaluates (not necessarily accurately) whether a quorum has been reached. If so, it begins the transport phase -- picking up and carrying other ants from the old to the new nest. These transports are generally faster than tandem runs and they conclude the house-hunting process by bringing the rest of the colony to the new nest.

\subsection{Main Results and Organization}
\label{sec:results}
Our main results are a mathematical model of the house-hunting process, a lower bound on the number of rounds required by any algorithm solving the house-hunting problem in the given model, and two house-hunting algorithms.

Our model (Section \ref{sec:model}) is based on a synchronous model of execution with $n$ probabilistic ants and communication limited to one ant leading another ant (tandem run or transport), chosen randomly from the ants at the home nest, to a candidate nest. Ants can also search for new nests by choosing randomly among all $k$ candidate nests. We do not model the time for an ant to find a nest or to lead a tandem run; each of these actions are assumed to take one round.

Our lower bound (Section \ref{sec:lower}) states that, under this model, no algorithm can solve the house-hunting problem in time sub-logarithmic in the number of ants. The main proof idea is that, in any step of an algorithm's execution, with constant probability, an ant that does not know of the location of the eventually-chosen nest remains uninformed. Therefore, with high probability, $\Omega(\log n)$ rounds are required to inform all $n$ ants. This technique closely resembles lower bounds for rumor spreading in a complete graph, where the rumor is the location of the chosen nest \cite{karp2000randomized}.

Our first algorithm (Section \ref{sec:optimal}) solves the house-hunting problem in asymptotically optimal time. The main idea is a typical example of positive feedback: each ant leads tandem runs to some suitable nest as long as the population of ants at that nest keeps increasing; once the ants at a candidate nest notice a decrease in the population, they give up and wait to be recruited to another nest. With high probability, within $\BO(\log n)$ rounds, this process converges to all $n$ ants committing to a single winning nest. Unfortunately, this algorithm relies heavily on a synchronous execution and on the ability to precisely count nest populations, suggesting that the algorithm is susceptible to perturbations of our model and most likely does not match real ant behavior. 

The goal of our second algorithm (Section \ref{sec:natural}) is to be more natural and resilient to perturbations of the environmental parameters and ant capabilities. The algorithm uses a simple positive-feedback mechanism: in each round, an ant that has located a candidate nest recruits other ants to the nest with probability proportional to its current population. We show that, with high probability, this process converges to all $n$ ants being committed to one of the $k$ candidate nests within $\BO(k \log n)$ rounds.  While this algorithm is not optimal, except when $k$ is assumed to be constant, it exhibits a much more natural process of converging to a single nest. In Section \ref{sec:discussion}, we discuss in more detail possible modifications to the algorithm and various perturbations and faults to which it is resilient. Such robustness criteria are necessary in nature and generally desirable for distributed algorithms.

\section{Model}
\label{sec:model}
Here we present a simple model of \emph{Temnothorax} ants behavior that is tractable to rigorous analysis, yet rich enough to provide a starting point for understanding real ant behavior.

The environment consists of a home nest, denoted $n_0$, along with $k$ candidate new nests, identified as $n_i$ for $i \in \{1,\cdots,k\}$. Each nest $n_i$ is assigned a quality $q(i) \in Q$, for some set $Q$. Throughout this paper we let  $Q= \{0,1\}$, with quality $0$ indicating an unsuitable nest, and $1$ a suitable one. Additionally, we assume that there is always at least one nest with $q(i) = 1$.

The colony consists of $n$ identical probabilistic finite state machines, representing the ants. We assume $n$ is significantly larger than $k$, with $k = O(n/\log n)$. Additionally, ants know the value of $n$ but not $k$, so the state machines may be parameterized by $n$ but must be uniform for all $k$. This assumption is based on evidence that real \emph{Temnothorax} ants and other species are able to estimate colony size and change their behavior in response \cite{dornhaus2006colony,cronin2013synergy}.

The general behavior of the state machines is unrestricted but their interactions with the environment and with other ants are limited to the high-level functions $\bv{search()}$, $\bv{go()}$, and $\bv{recruit()}$, defined below. Ants execute synchronously in numbered rounds starting with round $1$. In each round, each ant may perform unlimited local computation (transition through an arbitrary sequence of states), along with exactly one call to one of the functions: $\bv{search()}$, $\bv{go()}$, or $\bv{recruit()}$.

At the end of each round $r$, each ant $a$ is located at a nest, denoted by
$\ell(a,r) \in \{0,1,\cdots,k\}$; initially, before round $1$, all ants are located at the home nest. The value of $\ell(a,r)$ is set by the calls to $\bv{search()}$, $\bv{go()}$, or $\bv{recruit()}$ made by the ant in round $r$. Let $c(i,r) = |\{a|\ell(a,r) = i\}|$ denote the number of ants located in nest $n_i$ at the end of round $r$. 

In each round, each ant $a$ performs exactly one call to the following functions:
\begin{itemize}

\item $\bv{search()}$: Returns a triple $<i, q(i), c(i,r)>$ where $i$ is chosen uniformly at random from $\{1, \cdots, k\}$. Sets  $\ell(a,r) := i$. This function represents ant $a$ searching randomly for a nest; the return value of the function includes the nest index, the nest's quality and the number of ants at the nest.
	
\item $\bv{go(i)}$: Takes input $i \in \{1,\cdots,k\}$ such that there exists a round $r'<r$ in which $\ell(a,r') = i$. Returns $c(i,r)$. Sets $\ell(a,r) := i$. The function represents ant $a$ revisiting a candidate nest $n_i$; the function returns the number of ants at nest $n_i$ at the end of round $r$.
	
\item $\bv{recruit(b,i)}$: Takes input $b \in \{0,1\}$ and $i \in \{1,\cdots,k\}$ such that there exists a round $r'<r$ in which $\ell(a,r') = i$. Returns a pair $<j, c(0,r)>$ where $j \in \{1, \cdots, k\}$. Sets $\ell(a,r) := 0$. The return value $j$ is determined as follows.
Let $R$ be the set of all ants that call $recruit(\cdot, \cdot)$, and let $P$ be a uniform random permutation of all ants in $R$. Let $S \subseteq R$ be the set of ants that call $recruit(1,\cdot)$.

\end{itemize}

\begin{algorithm*}[h!]
\caption{Return value $j$ of $\bv{recruit(\cdot,\cdot)}$ for all ants $a \in R$}	
\label{model:recruit}
\begin{multicols}{2}
$M$: a set of pairs $(a,a')$ of ants, initially $\emptyset$ \\
\For{$i = 1$ to $|P|$}{
	\If{$a_{P(i)} \in S$ and $(\cdot, a_{P(i)}) \not\in M$}{
		$a' := $ uniform random ant from $R$ \\
		\If{$(a',\cdot) \not\in M$ and $(\cdot, a') \not\in M$}{
			$M := M \cup (a_{P(i)},a')$
		}
	}
}
\For{$i = 1$ to $|P|$}{
	\If{$(a^*, a_{P(i)}) \in M$}{
		\Return{$j$ to ant $a_{P(i)}$ where $j$ is \\input to $recruit(1,j)$ called by $a^*$} 
	}
	\textbf{else} 
		 \Return{$j$ to ant $a_{P(i)}$ where $j$ is \\input to $recruit(\cdot,j)$ called by $a_{P(i)}$}
}
\end{multicols}
\vspace{1.5mm}
\end{algorithm*}

In short, all actively recruiting ants in $S$ randomly choose an ant to recruit. $P$ simply serves as tie breaker to avoid conflicts between recruitments. It is important to note that this process is not a distributed algorithm executed by the ants, but just a modeling tool to formalize the idea of ants recruiting other ants randomly without introducing dependency chains between the ordered pairs of recruiting and recruited ants. Algorithm \ref{model:recruit} can be thought of as a centralized process run by the environment of the home nest in order to pair ants appropriately. We believe our results also hold under other natural models for randomly pairing ants. 

An ant \emph{recruits successfully} if it is the recruiting ant (first element) in one of the pairs in $M$.
If an ant recruits successfully or is not recruited, $\bv{recruit()}$ just returns the input nest id $i$. Otherwise, it returns the id of the nest that the ant is recruited to.

We intentionally define $\bv{recruit()}$ to locate ants back to the home nest, where recruitment happens. Additionally, $\bv{go(i)}$ is only applicable to a candidate nest $n_i$ for $i \neq 0$, so calling $\bv{recruit()}$ is the only way to return to the home nest. Since ants are required to call one of the three functions in each round, if an ant does not search (call $\bv{search()}$) or go to a nest (call $\bv{go()}$), it is required to stay at the home nest. 
In this way, all ants located in the home nest participate in recruitment in each round, either actively recruiting ($b=1$) or waiting to be recruited ($b=0$).

\emph{Recruitment} encompasses both the tandem runs and direct transport behavior observed in \emph{Temnothorax} ants. Since direct transport is only about three times faster than tandem walking \cite{pratt2010nest}, and since we focus on asymptotic behavior, we do not model this action separately.	

Next, we prove a general statement about the recruitment process that will be used in the proofs of our algorithms.

\begin{lemma}
\label{lem:recruit-succeed}
	Let $a$ be an arbitrary ant that executes $recruit(1,\cdot)$ in some round $r$, and assume $c(0,r) \geq 2$. Then, with probability at least $1/16$, $(a,\cdot) \in M$; that is, ant $a$ succeeds in recruiting another ant.
\end{lemma}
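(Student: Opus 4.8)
The plan is to condition on the position $p$ that ant $a$ occupies in the uniformly random permutation $P$, show that whenever $p$ is small $a$ recruits successfully with constant probability, and then average over $p$. Write $m := c(0,r)$; since every ant located at the home nest participates in recruitment, $R$ is exactly the set of ants at the home nest, so $|R| = m \ge 2$ and $P$ is uniform over the orderings of these $m$ ants, so $a$ occupies each position with probability $1/m$. The crucial structural observation is that a pair whose first coordinate is $a$ can be inserted into $M$ only at the iteration of the first loop that processes $a$; at that iteration $a$ actually makes a pick unless it has already been recruited by an earlier iteration --- call this bad event ``$a$ is \emph{pre-recruited}''. Hence $(a,\cdot)\in M$ holds as soon as (a) $a$ is not pre-recruited and (b) the ant $a'$ that $a$ then draws does not already appear in $M$.

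Conditioned on $a$ being in position $p$, I would prove two estimates. For (a): if $a$ is pre-recruited then one of the at most $p-1$ picks made at the earlier positions must equal $a$; each such pick, when it happens, is uniform over $R$, so a union bound over positions gives $\Pr[a\text{ pre-recruited}\mid\text{position }p]\le (p-1)/m$. Phrasing this as a union bound over positions, rather than counting the picks that actually occur, is what lets us sidestep the fact that the number of earlier picks is itself random. For (b): conditioned on $a$ not being pre-recruited, when $a$ is processed $M$ contains at most $p-1$ pairs and hence at most $2(p-1)$ distinct ants, so $a$'s pick --- a fresh independent uniform draw from $R$ --- avoids all of them, and therefore causes $(a,a')$ to be added to $M$, with probability at least $1-2(p-1)/m$. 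The possibility $a'=a$ is harmless: in that case $a\notin M$ already, so $(a,a)$ is added and still witnesses $(a,\cdot)\in M$. Multiplying the two estimates, $\Pr[a\text{ succeeds}\mid\text{position }p]\ge\bigl(1-\tfrac{p-1}{m}\bigr)\bigl(1-\tfrac{2(p-1)}{m}\bigr)$.

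Finally I would average over $p$ in two regimes. If $m\le 16$, it suffices to use $p=1$: then $M$ is empty when $a$ is processed, so $a$ succeeds with probability $1$, giving $\Pr[a\text{ succeeds}]\ge 1/m\ge 1/16$. If $m\ge 16$, restrict the average to $p\le\lfloor m/4\rfloor$; for each such $p$ we have $p-1<m/4$, so the per-position bound exceeds $\tfrac34\cdot\tfrac12=\tfrac38$, whence $\Pr[a\text{ succeeds}]\ge\tfrac38\cdot\tfrac{\lfloor m/4\rfloor}{m}\ge\tfrac38\cdot\tfrac16=\tfrac1{16}$, using $\lfloor m/4\rfloor/m\ge(m-3)/(4m)\ge 1/6$ for $m\ge 16$. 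In either case $\Pr[(a,\cdot)\in M]\ge 1/16$.

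The step I expect to be the main obstacle is the dependency bookkeeping behind estimate (a): the collection of picks that occur strictly before $a$'s turn is determined by the outcomes of those very picks (a pick that recruits some later ant stops that ant from ever picking), so one cannot treat the number of earlier picks as a fixed quantity --- the positionwise union bound is the clean way around this. A smaller point is to check that conditioning on the earlier history (which pins down $M$ at $a$'s turn and whether $a$ is pre-recruited) leaves $a$'s own draw uniform over $R$; this is immediate because the model generates each ``uniform random ant from $R$'' as an independent draw.
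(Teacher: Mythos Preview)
Your proof is correct and shares the same two-step decomposition as the paper's argument---first bound the probability that $a$ is not recruited before its turn, then bound the probability that $a$'s pick lands on an ant not already in $M$---but the technical execution differs. The paper conditions on $a$ lying in the first half of $P$, bounds each of the two sub-events by $(1-1/m)^{m/2}\ge 1/2$ via an independence-of-draws argument, and additionally conditions on $a$'s pick $a'$ landing in the second half (so that $a'$ has not yet had a chance to recruit); this yields four factors of $1/2$. You instead condition on the exact position $p$, replace the product bound by a positionwise union bound $(p-1)/m$ for the pre-recruitment event, and replace the ``$a'$ in the second half'' trick by the deterministic observation $|M|\le p-1$ at $a$'s turn, giving the counting bound $1-2(p-1)/m$. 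Your route is arguably more elementary (no appeal to $(1-1/m)^m\ge 1/4$, no need to argue that the earlier draws are conditionally independent), and it makes the dependency structure explicit; the paper's route is a bit quicker once one accepts that the uniform draws are independent of one another and of the permutation. Both land on the same constant $1/16$.
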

	
\begin{proof}
	One requirement for ant $a$ to succeed recruiting is that it does not get recruited by another ant positioned earlier in the random permutation $P$ than $a$. With probability $1/2$ ant $a$ is located in the first half of the random permutation, so conditioning on that, the probability that $a$ is not recruited by some ant with higher precedence in some round $r$ is at least:
	\begin{equation*}
		\left(1-\frac{1}{c(0,r)}\right)^{c(0,r)/2} \geq \frac{1}{4^{(1/2)}} \geq \frac{1}{2},
	\end{equation*}
where $c(0,r) \ge 2$ is the number of ants at the home nest in round $r$.
	
	Another requirement for ant $a$ to succeed recruiting is to choose to recruit an ant that has not itself successfully recruited and has not been recruited by the preceding ants. Recall that we conditioned on ant $a$ being located in the first half of the permutation $P$. With probability $1/2$, ant $a$ chooses to recruit an ant $a'$ located in the second half of the permutation; therefore, $a'$ appears after $a$ in the permutation and so $a'$ has not recruited yet when $a$ recruits. With probability at least $\left(1-\frac{1}{c(0,r)}\right)^{c(0,r)/2} \ge 1/2$, ant $a'$ is not recruited by any ant other than $a$ in round $r$.
	 
	 Therefore, in total, the probability that ant $a$ recruits successfully is at least: $1/2$ to be located in the first half of the permutation, $1/2$ to not be recruited by another ant, $1/2$ to recruit an ant in the second half of the permutation, and $1/2$ for this chosen ant to not be recruited by any other ant. This gives a total probability of at least $1/16$. 
\end{proof}

\paragraph{Problem Statement:}
An algorithm $\mathcal{A}$ solves the \textsc{HouseHunting} problem with $k$ nests in $T\in\mathbb{N}$ rounds with probability $1-\delta$, for $0 < \delta \leq 1$, if with probability $1-\delta$, taken over all executions of $\mathcal{A}$, there exists a nest $i \in \{1,\cdots,k\}$ such that $q(i)=1$ and $\ell(a,r) = i$ for all ants $a$ and for all rounds $r \geq T$.

\section{Lower Bound}
\label{sec:lower}

In this section, we present a lower bound on the number of rounds required for an algorithm to solve the house-hunting problem. The key idea of the proof is similar to the lower bounds on spreading a rumor in a complete graph \cite{karp2000randomized} where neighbors contact each other randomly. Assuming a  house-hunting process with a single good nest, its location represents the rumor to be spread among all ants and communication between random neighbors is analogous to the recruiting process.

Assume only a single nest, $n_w$, has quality $1$, and so it is the only option for the ants to relocate to. Additionally, assume that each ant is able to recognize nest $n_w$ once it knows its id. These assumptions only restrict the environment and add to the ants' power, so a lower bound under these assumptions is sufficient. Note that each ant has only two ways of reaching a nest: (1) by searching, or (2) by getting recruited to it.
Throughout this section, an ant is considered to be \emph{informed} if it know the id of the winning nest $n_w$; otherwise, an ant is considered to be \emph{ignorant}.

\begin{lemma}
\label{lem:ignorant}
	With probability at least $1/4$ an ant that is ignorant at the beginning of round $r$ remains ignorant at the end of round $r$.
\end{lemma}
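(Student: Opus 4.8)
The plan is to condition on which of the three functions the ant $a$ calls in round $r$ --- $\bv{search()}$, $\bv{go(\cdot)}$, or $\bv{recruit(\cdot,\cdot)}$ --- and to show that in each of the three cases $a$ stays ignorant at the end of round $r$ with probability at least $1/4$; since these cases are exhaustive and the randomness internal to each call is fresh (independent of how $a$'s state machine chose its action), the lemma then follows by the law of total probability. A useful preliminary observation to record is that an ignorant ant has never had the id $w$ returned to it, and so in particular has never had $\ell(a,r')=w$ for any $r'<r$ --- a prior location of $w$ could only arise from a $\bv{search()}$ call that returned $w$ or from a $\bv{go}(w)$ call, each of which reveals $w$'s id --- hence $a$ cannot legally call $\bv{go}(w)$.

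The $\bv{go}$ and $\bv{search}$ cases are short. If $a$ calls $\bv{go}(i)$, then $i\neq w$ and the only thing returned is the count $c(i,r)$, so no nest id is revealed and $a$ stays ignorant with probability $1$. If $a$ calls $\bv{search()}$, it becomes informed exactly when the uniformly random index it draws equals $w$, which happens with probability $1/k\le 1/2$ (here we use $k\ge 2$, the only interesting regime for a lower bound), so $a$ stays ignorant with probability at least $1/2$.

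The substantive case --- and the one I expect to be the main obstacle --- is $\bv{recruit}$. Then $a$ is one of the $m:=|R|\ge 1$ ants participating in recruitment at the home nest, and it can become informed this round only by being recruited by some ant $a^{*}\neq a$, i.e.\ by having $(a^{*},a)\in M$; the id returned to $a$ is then the nest $a^{*}$ is recruiting to, whereas if $a$ recruits successfully or is unmatched it merely gets back its own input id $i\neq w$. A necessary condition for $(a^{*},a)\in M$ is that $a^{*}$ draws $a$ as its uniformly random target from $R$. The subtlety is that Algorithm~\ref{model:recruit} processes recruiters sequentially, so the set of recruiters that actually draw a target, and the order in which they do, are themselves random and correlated with the draws; I would sidestep this by coupling the procedure to a pre-generated collection of i.i.d.\ uniform-on-$R$ ``intended targets'' $\{Y_{a'}\}$, one per recruiter, used whenever a recruiter draws. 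Since $a$ cannot recruit itself, there are at most $m-1$ recruiters $a'\neq a$, so the event that $a$ is recruited is contained in $\{\exists\, a'\neq a:\ Y_{a'}=a\}$, and therefore
\begin{equation*}
\Pr[a\text{ is recruited}]\ \le\ 1-\left(1-1/m\right)^{m-1}\ \le\ 1-1/e\ <\ 3/4 ,
\end{equation*}
using that $(1-1/m)^{m-1}$ is decreasing in $m$ with limit $1/e$ (and the claim is vacuous when $m=1$). Hence $a$ stays ignorant with probability more than $1/4$ in this case as well, and combining the three cases proves the lemma --- in fact with $1/e$ in place of $1/4$, though $1/4$ is all that is used later.

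I expect essentially all the genuine work to be in making the coupling rigorous: arguing that each recruiter's target may be treated as an independent uniform draw from $R$, so that the probability $a$ is drawn by none of the at most $m-1$ other recruiters factors as the corresponding product of $(1-1/m)$ terms. The remaining two cases and the final recombination are one-line arguments.
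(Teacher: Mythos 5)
Your proposal is correct and follows essentially the same route as the paper's proof: a case split on which of the three functions the ant calls, with the recruit case handled by bounding the probability that any of the other at-most-$m-1$ recruiters draws $a$ as its uniform target, yielding a $(1-1/m)^{m-1}\ge 1/e$ bound where the paper uses the essentially identical $(1-1/c(0,r))^{c(0,r)}\ge 1/4$. The only differences are cosmetic (your slightly sharper constant, and your explicit coupling to i.i.d.\ intended targets, which the paper treats implicitly).
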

\begin{proof}
	In round $r$, ant $a$ may either (1) not participate in recruitment or search (2) participate in recruitment at the home nest or (3) search. In the first case, ant $a$ does not call either $\bv{recruit()}$ or $\bv{search()}$ and is guaranteed to remain ignorant. 
	
	In the second case, ant $a$ calls $\bv{recruit(0,\cdot)}$. Let $X_r$ be the number of informed ants at the beginning of round $r$ that are recruiting at the home nest to the winning nest; these ants are calling $\bv{recruit(1,w)}$. Then, the probability that ant $a$ remains ignorant after round $r$ is at least:
	\begin{equation*}
		\left(1-\frac{1}{c(0,r)}\right)^{X_r} \geq \left(1-\frac{1}{c(0,r)}\right)^{c(0,r)} \geq \frac{1}{4}
	\end{equation*}
	where $c(0,r) \ge 2$ is the number of ants at the home nest at the beginning of round $r$. Note that if $c(0,r) < 2$, ant $a$ is forced to recruit itself, so it remains ignorant.
	
In the third case, for $k \ge 2$, the probability that ant $a$ remains ignorant after one round of searching is $(k-1)/k \geq 1/2$. 

Therefore, the overall probability that an ignorant ant remains ignorant in round $r$, after either searching or participating in recruitment, is at least $1/4$ (since $1/2 > 1/4$).
\end{proof}

Let random variable $\bar{X_r}$ denote the number of ignorant ants at the beginning of round $r$. In order for the algorithm to solve the house hunting problem, it is necessary that $\bar{X_r} = 0$ for some round $r$. Otherwise, there is at least one ant that is not committed to the winning nest $n_w$.

\begin{theorem}
\label{thm:lower}
	For any constant $c >0$, let $\mathcal{A}$ be an algorithm that solves the \textsc{HouseHunting} problem with $k \ge 2$ nests in $T$ rounds with probability at least $1/n^c$. Then, $T = \Omega(\log n)$.
\end{theorem}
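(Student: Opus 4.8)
The plan is to argue by contradiction: suppose $\mathcal{A}$ solves \textsc{HouseHunting} in $T \le \varepsilon \log n$ rounds with probability at least $1/n^c$, for a constant $\varepsilon = \varepsilon(c) > 0$ to be fixed. Since the unique good nest is $n_w$, any execution that solves the problem by round $T$ has all ants located at $n_w$ from round $T$ on, hence all ants informed, i.e. $\bar{X_T} = 0$. So it suffices to show $\Pr[\bar{X_T} = 0] < 1/n^c$ when $\varepsilon$ is small enough. The mechanism is exactly the ``tail'' of push rumor spreading: the ignorant population $\bar{X_r}$ can shrink by at most a constant factor per round (with overwhelming probability) as long as it is still $\Omega(\log n)$, so starting from $\bar{X_0} = n$ it remains positive for $\Omega(\log n)$ rounds.

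First I would establish a one-round drop estimate. Fix a round $r$ and condition on the history $\mathcal{F}_r$ through the end of round $r$, including each ant's choice of which of $\bv{search}$, $\bv{go}$, $\bv{recruit}$ to call in round $r+1$ (these choices depend only on past and internal randomness). By Lemma~\ref{lem:ignorant}, each ant ignorant at the start of round $r+1$ stays ignorant with probability at least $1/4$ conditioned on $\mathcal{F}_r$, so $\E[\bar{X}_{r+1} \mid \mathcal{F}_r] \ge \bar{X_r}/4$. Write $\bar{X}_{r+1} = \sum_a Y_a$ summing over ants $a$ ignorant at the start of round $r+1$, where $Y_a = 1$ iff $a$ stays ignorant. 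Given $\mathcal{F}_r$: ants calling $\bv{go}$ contribute deterministic $1$'s; ants calling $\bv{search}$ have independent $Y_a$, each $1$ with probability $(k-1)/k \ge 1/2$ (here $k\ge 2$ is used) and independent of the recruitment randomness; and ants calling $\bv{recruit}(0,\cdot)$ have $Y_a$ determined by the matching $M$ of Algorithm~\ref{model:recruit}, which form a negatively associated family (the random greedy matching ``uses up'' recruiters and recruitees). A family assembled from mutually independent blocks, one of which is negatively associated, is itself negatively associated, so a Chernoff lower-tail bound applies to $\bar{X}_{r+1}$:
\[
\Pr\!\left[\bar{X}_{r+1} < \tfrac{1}{8}\bar{X_r} \;\middle|\; \mathcal{F}_r\right] \;\le\; e^{-\bar{X_r}/32}.
\]
In particular, whenever $\bar{X_r} \ge M := 32(c+2)\ln n$, this conditional failure probability is at most $n^{-(c+2)}$.

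Next I would iterate. Since initially no ant knows any nest id, $\bar{X_0} = n$; a routine induction using the one-round estimate shows that for every $r$ with $n/8^{r} \ge M$ we have $\bar{X_r} \ge n/8^{r}$ with probability at least $1 - r\,n^{-(c+2)}$ (at each step, on the event $\bar{X_r}\ge n/8^r$ we have $\bar{X_r}\ge M$, so the conditional bound above is in force). The constraint $n/8^{r} \ge M$ holds for all $r \le \varepsilon\log n$ once $\varepsilon$ is chosen small relative to $\log 8$ and $n$ is large; for such $r$ we get $\bar{X_r} \ge n/8^{r} \ge M \ge 1$. A union bound over the at most $\varepsilon\log n$ rounds $r \le T \le \varepsilon \log n$ then gives $\Pr[\bar{X_T} = 0] \le \Pr[\exists r\le T:\ \bar{X_r} < M] \le (\varepsilon\log n)\,n^{-(c+2)} < n^{-c}$ for large $n$, contradicting the assumed success probability. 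Hence no such $\mathcal{A}$ exists with $T \le \varepsilon\log n$, i.e. $T = \Omega(\log n)$.

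The main obstacle is the concentration step: everything else is bookkeeping, but making the lower-tail bound on $\bar{X}_{r+1}$ rigorous requires justifying that the ``$a$ is not recruited to $n_w$'' indicators over the home-nest ants form a negatively associated family under the pairing process of Algorithm~\ref{model:recruit} (equivalently, a bounded-difference/Azuma argument for that process). If one wishes to avoid this, a fallback is a per-round case split: if at least half the ignorant ants call $\bv{search}$, independence of the search outcomes alone yields $\bar{X}_{r+1}\ge \bar{X_r}/8$ with probability $1 - 2^{-\Omega(\bar{X_r})}$; if at least half call $\bv{go}$, then $\bar{X}_{r+1}\ge \bar{X_r}/2$ deterministically; and when most ignorant ants call $\bv{recruit}$ one still needs anti-concentration for the matching — so the correlation structure of the recruitment process is the unavoidable technical core.
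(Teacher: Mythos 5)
Your overall strategy is a valid reorganization of the same underlying idea the paper uses: Lemma \ref{lem:ignorant} gives each ignorant ant a constant probability of staying ignorant, and one then needs a lower-tail concentration bound on the number of surviving ignorant ants despite the dependencies created by the pairing in Algorithm \ref{model:recruit}. The difference is structural. You track $\bar{X}_r$ round by round, proving $\bar{X}_{r+1} \ge \bar{X}_r/8$ with high probability and union bounding over $\Theta(\log n)$ rounds, whereas the paper applies concentration exactly once, to the indicators ``ant $a$ stays ignorant for all of the first $r$ rounds,'' whose conditional success probability (given the outcomes of the ants preceding $a$) is at least $(1/4)^r$; this yields $\Omega(\sqrt{n})$ ignorant ants at round $r = \Theta(\log n)$ directly, with no induction and no union bound over rounds. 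Your bookkeeping (the choice of $\varepsilon$, the threshold $M$, the final contradiction with the success probability $1/n^c$) is fine.

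The genuine gap is the concentration step that you yourself flag as the ``unavoidable technical core.'' You assert that the indicators ``$a$ is not recruited'' form a negatively associated family under the random greedy matching, and that negative association survives combining this block with the independent search outcomes. Neither claim is proved, and the first in particular is not routine: negative association for a sequential matching process of this kind needs a real argument (a coupling or verification of a closure framework), and your fallback case split explicitly punts on the recruit-heavy case. The paper closes exactly this hole with Lemma \ref{lem:dependent-chernoff}: a per-ant conditional lower bound $P[Y_a = 1 \mid Y_{a'} \text{ for all } a' \text{ before } a \text{ in } P] \ge 1/4$ --- which is precisely what Lemma \ref{lem:ignorant} delivers, since an ant's chance of staying ignorant is at least $1/4$ regardless of what happens to the ants ahead of it in the permutation --- implies stochastic domination of $\bar{X}_{r+1}$ by a sum of independent Bernoulli$(1/4)$ variables, to which the standard Chernoff lower tail applies. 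Substituting that domination lemma for your negative-association claim makes your round-by-round version go through; as written, the proposal leaves its central estimate unjustified.
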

\begin{proof}
	
By Lemma \ref{lem:ignorant}, with probability at least $1/4$ an ant that is ignorant at the beginning of round $r$ remains ignorant at the end of round $r$.
The probability that an ignorant ant remains ignorant for $r = (\log_4 n)/2 - \log_4(12c)$ consecutive rounds is at least $(1/4)^r \ge 12c/\sqrt{n}$, where $c>1$ is an arbitrary constant\footnote{We assume a constant is asymptotically smaller than $\sqrt{n}$.}.
So the expected number of ignorant ants at the beginning of round $r = (\log_4 n)/2 - \log_4(12c)$ is $\E[\bar{X}_r] \geq (12c/\sqrt{n})(n)= 12c\sqrt{n}$.

Note that the random variable $\bar{X}_r$ can be expressed as a sum of indicator random variables $\bar{X}_r^a$, where for each ant $a$, $\bar{X}_r^a$ is $1$ if ant $a$ is ignorant, and $0$ if it is informed, at the beginning of round $r$. Unfortunately, these indicator random variables are not independent, so we cannot directly apply a Chernoff bound. However, we can still show that their sum is bounded by defining random variables that dominate the $\bar{X}_r^a$ variables. 

Let random variable $Y_r^a$ be a random variable such that $P[Y_r^a = 1] = (1/4)^r$ for each $a$ and each $r$. Let $Y_r$ be the sum of these independent random variables for all ants $a$. Note that $\E[Y_r] = n(1/4)^r$, so for $r = (\log_4 n)/2 - \log_4(12c)$, $\E[Y_r] = 12c\sqrt{n}$. Therefore, by a Chernoff bound:

\begin{equation*}
	P[Y_r < 6c\sqrt{n}] < e^{-(12c\sqrt{n})/12} \leq e^{-c\ln n} \leq \frac{1}{n^c}.
\end{equation*}

By Lemma \ref{lem:ignorant}, the probability that an uninformed ant remains uninformed is at least $1/4$, regardless of the other ants' actions in round $r$. Therefore, $P[\bar{X}_r^a = 1 | \text{all } \bar{X}_r^{a'} \text{ for $a'$ before $a$ in $P$}] \geq (1/4)^r$, where $P$ is the random permutation used to model the recruitment process. So, by the definition of $Y_r^a$:

\begin{equation*}
	P[\bar{X}_r^a = 1 | \text{all } \bar{X}_r^{a'} \text{ for $a'$ before $a$ in $P$}] > \left(\frac{1}{4}\right)^r = P[Y_r^a = 1]
\end{equation*}

By Lemma 1.18 in \cite{doerr2011theory}[Chapter 1] (also stated as Lemma \ref{lem:dependent-chernoff} in the Appendix), $P[\bar{X}_r < x] \leq P[Y_r < x]$ for any $x \leq n$. In particular, for $r = (\log_4 n)/2 - \log_4(12c)$, it follows that :

\begin{equation*}
	P[\bar{X}_r < 6c\sqrt{n}] \leq P[Y_r < 6c\sqrt{n}] < \frac{1}{n^c}
\end{equation*}

Therefore, with probability at least $1-1/n^c$, at least $6c\sqrt{n}$ ants are ignorant at the end of round $r = (\log_4 n)/2 - \log_4(12c)$. These ants are not informed of the id of the  winning nest, and so cannot be located at this nest. 

Since algorithm $\mathcal{A}$ solves the \textsc{HouseHunting} problem with probability at least $1/n^c$ in $T$ rounds, then, with probability at least $1/n^c$ after $T$ rounds, all ants are informed of the winning nest. We showed that with probability at least $1-1/n^c$, at least $6c\sqrt{n}$ ants are ignorant at the end of round $r = (\log_4 n)/2 - \log_4(12c)$. Therefore, $T \geq r = \Omega(\log n)$.
\end{proof}

\section{Optimal Algorithm}
\label{sec:optimal}
We present an algorithm that solves the \textsc{HouseHunting} problem and is asymptotically optimal. In the key step of the algorithm, each ant tries to recruit other ants to the nest it found after searching; after each round of recruiting, each ant checks if the number of ants at its nest has increased or decreased. Nests with a non-decreasing population continue competing while nests with a decreasing population drop out. In each round, the population of at least one nest is non-decreasing, so at least one nest will remain in the competition. Additionally, other nests drop out at a constant rate, meaning a single winning nest will be identified quickly.

This algorithm relies heavily on the synchrony in the execution and the precise counting of the number of ants at a given nest, which makes it sensitive to perturbations of these values, and therefore, not a natural algorithm that resembles ant behavior. However, the algorithm demonstrates that the \textsc{HouseHunting} problem is solvable in optimal time in the model of Section \ref{sec:model}.

\subsection{Algorithm Pseudocode and Description}

The pseudocode of the algorithm is presented in Algorithm \ref{algo:optimal}. Each call to the functions from Section \ref{sec:model} (in bold) takes exactly one round. The remaining lines of the algorithm are considered to be local computation and are executed in the same round as the preceding function call. 

Throughout the algorithm's execution, each ant is in one of four states: $search$, $active$, $passive$, or $final$, initially $search$. Based on the state of the ant, it executes the corresponding case block from the pseudocode. 
An ant is said to be \emph{committed} to a nest $n_i$ if $nest = i$.

\begin{algorithm}[h!]
\caption{Optimal \textsc{HouseHunting} Algorithm}	
\label{algo:optimal}
 \begin{multicols}{2}
$state:$ $\{search, active, passive,final\}$, \\ \hspace{12mm} initially $search$\\
$nest:$ nest index $i \in \{0,\cdots,k\}$, initially $0$ \\
$count:$ an integer in $\{0, \cdots, n\}$, initially $0$ \\
$quality:$ a boolean in $\{0,1\}$, initially $0$ \\
\vspace{2mm}
	\Case{$state = search$} {
		$<nest, count, quality> := \bv{search()}$ \textcolor{blue}{(R1)} \\
		\If{$quality = 0$} {
			$state := passive$
		}
		\Else{
			$state := active$
		}
	}
	\vspace{2mm}
	\Case{$state = passive$}{
		$\bv{go(nest)}$ \textcolor{blue}{(R1)} \\
		$<nest_t, \cdot> := \bv{recruit(0,nest)}$ \textcolor{blue}{(R2)} \\
		\If{$nest_t \neq nest$} {
			$nest := nest_t$ \\
			$state := final$
		}
		$\bv{go(nest)}$ \textcolor{blue}{(R3)} \\
		$\bv{go(nest)}$ \textcolor{blue}{(R4)}
	}
	\vspace{2mm}
	\Case{$state = final$}{
		$<nest,\cdot> := \bv{recruit(1,nest)}$ \textcolor{blue}{(R1)} 
	}
	\vspace{2mm}
	\Case{$state = active$}{
		$<nest_t,\cdot> := \bv{recruit(1,nest)}$ \textcolor{blue}{(R1)}  \\
		$count_t := \bv{go(nest_t)}$ \textcolor{blue}{(R2)} \\
		\If	{$(nest_t = nest)$ and \\ \hspace{4mm} $(count_t \geq count)$}{
			$count := count_t$ \\
			$\bv{go(nest)}$ \textcolor{blue}{(R3)} \\
			$<\cdot ,count_h> := \bv{recruit(0,nest)}$ \textcolor{blue}{(R4)}\\
			\If{$count_h = count$} {
				$state := final$ \\
			}
		}
		\ElseIf{$(nest_t = nest)$ and \\ \hspace{12mm} $(count_t < count)$}{
			$state := passive$ \\
			$\bv{recruit(0,nest)}$ \textcolor{blue}{(R3)} \\
			$\bv{go(nest)}$ \textcolor{blue}{(R4)}
		}
		\Else{
			$nest := nest_t$ \\
			$count_n := \bv{go(nest)}$ \textcolor{blue}{(R3)}\\
			\If{$count_n < count_t$}{
				$state := passive$
			}
			$\bv{go(nest)}$ \textcolor{blue}{(R4)}
		}
	}	
	\end{multicols}
	\vspace{2mm}
\end{algorithm}

Note that the $search$ subroutine is executed only once, during the first round, and the $final$ subroutine is what active ants do at the end of the execution to recruit all ants to the winning nest. The other two subroutines, $active$ and $passive$, represent the actions of active (recruiting) ants, and ants from bad/dropped-out nests, respectively. Each one of these subroutines takes exactly four rounds; each round is labeled as either R1, R2, R3, or R4 in the pseudocode. The subroutines are carefully scheduled in such a way that these two types of ants do not meet until the end of the competition process when a single winning nest remains; otherwise, the competition between ants from competing nests would be slowed down by ants from dropped-out nests. Therefore, the $active$ and $passive$ subroutines are padded with $recruit(0,\cdot)$ and $go(nest)$ calls to achieve such interleaving (lines 13, 18--19, 35--36, 42 are such padding rounds). 

\begin{itemize}

\item \textbf{Search (lines 6--11): } In the first round, each ant searches for a nest and commits to it. If the nest has quality of $0$, the ant moves to the $passive$ state; otherwise, it moves to the $active$ state.

\item \textbf{Passive (lines 12--19): } A passive ant is an ant committed to either a bad or a dropped-out nest. The ant spends a round at its (non-competing) nest, then it goes home to be recruited. This call to $recruit(0,nest)$ never coincides with a $recruit(1,nest)$ of an active ant, so a passive ant can only get recruited by an ant in the $final$ state calling $recruit(1,nest)$. Once successfully recruited, the passive ant moves to the $final$ state and commits to the new nest.

\item \textbf{Final (lines 20--21): } An ant in the final state is aware that a single winning nest remains, so it recruits to it in each round. This call to $recruit(1,nest)$ coincides with the call to $recruit(0,nest)$ of passive ants, so once a single nest remains, passive ants are recruited to it in every fourth round.

\item \textbf{Active (lines 22--42): } An active ant tries to recruit other ants to its competing nest by executing $recruit(1,nest)$. The ant then goes to the resulting nest to count the number of ants there. Based on the values of the resulting nest ($nest_t$) and count ($count_t$), we consider three cases:
\begin{itemize}
\item \emph{Case 1 (lines 25--31):} $nest_t$ is the same as the nest that the ant is committed to and the number of ants in that nest has not decreased; therefore, the nest remains competing. As a result, the ant updates the new count and spends an extra round at the nest that has a special purpose with respect to Cases 2 and 3 below. Finally, the ant checks if the number of ants at the home nest is the same as the number of ants at the committed nest; if this is the case, it means that all ants have been recruited to a single winning nest and the ant switches to the $final$ state.

\item \emph{Case 2 (lines 32--36):} $nest_t$ is the same as the nest that the ant is committed to but the number of ants has decreased; therefore the nest is about to drop out. As a result, the ant sets its state to $passive$ and spends a round at the home nest, which coincides with the round an active ant spends at the committed nest in Case 1. 

\item \emph{Case 3 (lines 37--42):} $nest_t$ is different from the nest that the ant is committed, which indicates the ant got recruited to another nest. Although it already knows the number of ants ($count_t$) at the new nest, the ant updates that count ($count_n$). The reason for this is to determine whether this new nest is about to compete or drop out. If $count_t=count_n$, the nest is competing because the active ants in Case 1 are spending the same round at the committed nest; if $count_t>count_n$, the nest is dropping out because the ants in Case 2 already determined a decrease in the number of ants and are spending this round at the home nest.
\end{itemize}
\end{itemize}

\subsection{Correctness Proof and Time Bound}
As written, Algorithm \ref{algo:optimal} never terminates; after all ants are committed to the same nest and in the $final$ state, they continue to recruit (each other) in every round. This issue can easily be handled if ants check whether the number of ants at the home nest is the same as the number of ants at the candidate nest.  However, for simplicity, we choose not to complicate the pseudocode and consider the algorithm to terminate once all ants have reached the $final$ state and, thus, committed to the same unique nest.

\paragraph{Proof Overview: } The correctness proof and time bound of Algorithm \ref{algo:optimal} are structured as follows. 
By defining a slightly different but equivalent recruitment process, we show, in Lemma \ref{lem:symm-nest}, that a competing nest is equally likely to continue competing and to drop out. Consequently, as we show in Lemma \ref{lem:sum-negative}, each competing nest has a constant probability of dropping out. We put these lemmas together in Theorem \ref{thm:optimal} to show that, with high probability, Algorithm \ref{algo:optimal} solves the \textsc{HouseHunting} problem in $\BO(\log n)$ rounds: $\BO(\log k)$ rounds to converge to a single nest and $\BO(\log n)$ rounds until all passive ants are recruited to it.

Let $R_3$ be the set of round numbers $r$ such that $r \in R_3$ iff $r \mod 4 = 3$; these rounds are exactly the rounds in which only ants committed to active nests are located at the home nest and try to recruit each other. Also, let $C(i,r)$ for each $r \in R_3$ and each nest $n_i$ denote the set of ants committed to nest $n_i$ and competing at the home nest. This implies that $|\cup_{i\in[1, k]} C(i,r)| = c(0,r)$ for $r \in R_3$.

Let random variable $X_r^a$, for each ant $a$ and each round $r \in R_3$, take on values $-1$, $0$ or $1$ as follows. If ant $a$ gets recruited by another ant in round $r$, then $X_r^a = -1$; if ant $a$ successfully recruits another ant, then $X_r^a = 1$; otherwise, $X_r^a = 0$. 






Let random variable $Y_r^i$ denote the change in the number of ants at nest $n_i$ after one round of recruiting, where $r \in R_3$:
	\begin{equation*}
		Y_r^i = \sum_{a \in C(i,r)} X_r^a.
	\end{equation*}
	
Informally speaking, the change in population of nest $n_i$ is simply the sum of identically distributed $\{-1,0,1\}$ random variables with mean $0$ that take on non-zero values with constant probability. Therefore, the sum of these variables is negative with constant probability. However, proving this fact requires a more rigorous argument because the $X_r^a$ variables are not independent.

As specified by the recruitment model, $Y_r^i$ is the result of a randomized recruitment process in which all ants are ordered by a random permutation. Then, in order of the permutation, ants choose uniformly at random other ants at the home nest to recruit. The random variables involved in this process are the random permutation $P$ as well as the set of random choices of ants. Consider the vector $\bv{o}_r$, of length $c(0,r)$,  where $\bv{o}_r(a) = -1$ if ant $a$ is chosen by another ant, $\bv{o}_r(a)=1$ if ant $a$ chooses an ant $a'$ such that $a'$ had not already chosen an ant and $a'$ was not already chosen by another ant, and $\bv{o}_r(a) = 0$ otherwise.


\begin{lemma}
\label{lem:symm-nest}
	Let nest $n_i$ be a competing nest in round $r \in R_3$. Then, $P[Y_r^i < 0] = P[Y_r^i > 0]$.
\end{lemma}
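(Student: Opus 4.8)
The plan is to exhibit a measure-preserving bijection (a coupling) on the underlying probability space of the recruitment process that sends every outcome in which $Y_r^i > 0$ to an outcome in which $Y_r^i < 0$, while preserving the probability of each outcome, so that the two events have equal probability. The natural object to act on is the pair $(P, \text{choices})$ describing the random permutation and the recruitment choices, or, more conveniently, the induced outcome vector $\bv{o}_r$. The key observation — this is the content of the remark preceding the lemma about defining ``a slightly different but equivalent recruitment process'' — is that the recruitment outcome is symmetric under a global reversal of the permutation order together with a swap of the roles of ``recruiter'' and ``recruited'' within each matched pair. Concretely, if in some outcome a pair $(a, a')$ is formed with $a$ recruiting $a'$ (so $a$ is ahead of $a'$ in $P$), then in the reversed outcome $a'$ comes ahead of $a$ and we let $a'$ recruit $a$ instead. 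This sends the coordinate pattern $(\bv{o}_r(a), \bv{o}_r(a')) = (1, -1)$ to $(-1, 1)$ and fixes all $0$ coordinates, hence flips the sign of $Y_r^i$ for every nest simultaneously.

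The steps, in order, would be: (1) Recall/define the equivalent recruitment process so that an ``outcome'' is precisely the data of a random permutation $P$ together with, for each actively-recruiting ant, its random target; verify this induces the same distribution on $\bv{o}_r$ as Algorithm~\ref{model:recruit}. (2) Define the involution $\iota$ on this space: reverse $P$, and for each matched pair transpose the recruiter/recruited labels (equivalently, re-derive $M$ from the reversed permutation — one should check the matching produced by running the greedy pairing on the reversed order is exactly the set of transposed pairs, which is where a small amount of care is needed). (3) Check $\iota$ is a bijection with $\iota \circ \iota = \mathrm{id}$ and that it is measure-preserving: reversing a uniform permutation gives a uniform permutation, and the target choices are exchangeable in the required way. (4) Check that $\iota$ maps $\{Y_r^i > 0\}$ onto $\{Y_r^i < 0\}$: since $\bv{o}_r$ under $\iota$ is obtained by negating every coordinate (each $1$ becomes a $-1$ and vice versa, $0$'s unchanged), we get $Y_r^i \mapsto -Y_r^i$ for the fixed competing nest $n_i$, because $C(i,r)$ is a fixed set of ants not affected by the relabeling. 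Conclude $P[Y_r^i > 0] = P[\iota^{-1}(\{Y_r^i > 0\})] = P[Y_r^i < 0]$.

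The main obstacle I expect is step (2)/(3): making precise that the greedy matching in Algorithm~\ref{model:recruit} run on the reversed permutation yields exactly the ``transpose'' of the original matching, and that this is consistent with a genuine measure-preserving map on the full randomness (permutation plus choices), rather than just an ad hoc sign-flip on $\bv{o}_r$. In particular one must confirm that after reversal the ant that was recruited can validly be viewed as ``choosing'' its former recruiter — i.e. that the set $S$ of ants willing to actively recruit, and the constraint ``target not already matched,'' behave correctly under the swap. This is plausible precisely because the original process only uses the permutation as a tie-breaker among conflicting would-be recruitments and the pairing is otherwise symmetric in the two members of a pair; formalizing ``it only uses $P$ as a tie-breaker'' is the crux. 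Everything else — uniformity of the reversed permutation, the fact that $0$-coordinates are fixed, and the identity $Y_r^i \mapsto -Y_r^i$ — is routine once the involution is set up correctly.
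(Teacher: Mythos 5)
Your overall strategy --- exhibit a measure-preserving involution on the randomness of the recruitment process that negates every coordinate of $\bv{o}_r$ --- is the right one, and it is also the paper's strategy. But the specific involution you propose (reverse $P$ and transpose the recruiter/recruited roles within each matched pair) does not work, and the step you flag as ``the crux'' is exactly where it fails. The greedy pairing of Algorithm~\ref{model:recruit} is not equivariant under reversal of $P$: an ant whose recruitment attempt \emph{failed} (outcome $0$) because its target was already matched will, once moved to the front of the reversed order, necessarily succeed, since at that moment $M=\emptyset$ and any target it picks is free. Concretely, take three ants processed in order $1,2,3$, with ant $1$ targeting ant $2$ and ant $3$ also targeting ant $2$; the outcome is $(+1,-1,0)$ with matching $\{(1,2)\}$. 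In the reversed order $3,2,1$, ant $3$ acts first and succeeds no matter what it targets, so no assignment of targets reproduces the negated outcome $(-1,+1,0)$. (Relatedly, your parenthetical ``so $a$ is ahead of $a'$ in $P$'' is false in this model: a recruiter can follow its recruit in $P$, e.g.\ when the recruiter's own earlier slot never came up because it was matched late, or when it recruits an earlier ant that failed its own attempt.)

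The symmetry that does work is exchangeability of the \emph{ants}, not reversibility of the \emph{order}. In rounds $r \in R_3$ every ant at the home nest calls $recruit(1,\cdot)$, so the test ``$a_{P(i)} \in S$'' always passes and the pattern of $+1$s, $-1$s and $0$s over positions is independent of which ant occupies which position. Since each pair in $M$ contributes exactly one $+1$ and one $-1$, the realized $\bv{o}_r$ has equally many $+1$s and $-1$s, and one can compose the uniform assignment of ants to positions with a permutation $P'$ that swaps the $+1$-ants with the $-1$-ants. This preserves the uniform distribution, replaces $\bv{o}_r(a)$ by $-\bv{o}_r(a)$ for every ant $a$, and hence sends $Y_r^i$ to $-Y_r^i$ for the fixed set $C(i,r)$, giving $P[Y_r^i<0]=P[Y_r^i>0]$. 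To repair your argument, replace the reversal involution by this relabeling involution, conditioning on the positional outcome pattern so that the outcome-dependence of $P'$ is handled rigorously.
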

\begin{proof}
Random variable $Y_r^i$ is simply the sum of values of $\bv{o}_r(a)$ corresponding to ants $a \in C(i, r)$. Since $\bv{o}_r$ has exactly the same number of $-1$s and $1$s, it is possible to choose some (non-random) permutation $P'$ that swaps the locations of the $-1$s and $1$s in $\bv{o}_r$. Choosing a random permutation $P$ in the recruitment process and then applying the swapping permutation $P'$ negates the value of $Y_r^i$. Moreover, choosing a uniform random permutation $P$ and then applying this swapping permutation $P'$ still yields a uniform random permutation. So, $-Y_r^i$ is distributed according to the exact same distribution as $Y_r^i$. Therefore, $Y_r^i$ is symmetric around $0$ and $P[Y_r^i < 0] = P[Y_r^i > 0]$.
\end{proof}

\begin{lemma}
\label{lem:sum-negative}
	Let nest $n_i$ be a competing nest in round $r \in R_3$. If $|C(i,r)| < c(0,r)$, then $P[Y_r^i < 0] \geq 1/66$.
\end{lemma}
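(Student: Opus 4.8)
The plan is to combine the symmetry established in Lemma~\ref{lem:symm-nest} with a lower bound on the probability that $Y_r^i \neq 0$. By Lemma~\ref{lem:symm-nest}, $P[Y_r^i < 0] = P[Y_r^i > 0]$, so $P[Y_r^i < 0] = \tfrac{1}{2}\bigl(1 - P[Y_r^i = 0]\bigr)$. Hence it suffices to show $P[Y_r^i = 0] \leq 1 - 2/66 = 32/33$, i.e.\ to exhibit a constant-probability event on which the population of nest $n_i$ strictly changes. The natural choice is to lower-bound $P[Y_r^i > 0]$ directly: if some ant committed to $n_i$ recruits successfully and no ant committed to $n_i$ gets recruited away, then $Y_r^i \geq 1 > 0$. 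Actually it is cleaner to just produce one event forcing $Y_r^i > 0$ and use symmetry to double it.

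First I would fix an arbitrary ant $a \in C(i,r)$; this uses the hypothesis $|C(i,r)| \geq 1$, and the hypothesis $|C(i,r)| < c(0,r)$ guarantees there is at least one ant \emph{outside} $n_i$ at the home nest, which is what makes a net increase possible. Since $a$ is competing at an active nest in a round $r \in R_3$, it is executing $recruit(1,\cdot)$, so Lemma~\ref{lem:recruit-succeed} applies: with probability at least $1/16$, ant $a$ succeeds in recruiting some ant $a'$. I would then argue that, conditioned on this, with constant probability the recruited ant $a'$ lies \emph{outside} $C(i,r)$ and, simultaneously, no other ant of $C(i,r)$ is recruited away from $n_i$ — so the net change $Y_r^i$ is strictly positive. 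The point is that a successful recruitment by $a$ of an ant $a' \notin C(i,r)$ contributes $+1$ (for $a'$ joining) while $a$ itself stays at nest $i$ — wait, more carefully: $a$ calling $recruit(1,\cdot)$ returns to the home nest and then (in the algorithm) goes back to $nest$; the bookkeeping of $X_r^a$ counts $+1$ for a successful recruiter and $-1$ for a recruited ant, so a successful recruitment of an outside ant by $a$ adds $+1$ to $Y_r^i$ from $a$'s own term and $0$ from $a'$'s term (since $a' \notin C(i,r)$), giving $Y_r^i \geq 1$ provided no ant of $C(i,r)$ is recruited away. So I must also bound below the probability that no ant in $C(i,r) \setminus \{a\}$ is recruited by an ant preceding it in $P$; this is at least $(1 - 1/c(0,r))^{c(0,r)} \geq 1/4$ type bound, union-bounded appropriately, or folded into a re-run of the Lemma~\ref{lem:recruit-succeed} argument.

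The main obstacle is the dependence between events: the same random permutation $P$ and the same random recruitment choices govern both "$a$ recruits an outside ant" and "no ant of $C(i,r)$ gets recruited away." I expect the cleanest route is to re-do the Lemma~\ref{lem:recruit-succeed}-style case analysis but tracking one extra requirement — that $a$'s chosen target is outside $C(i,r)$, which holds with probability at least $(c(0,r) - |C(i,r)|)/c(0,r)$, but this can be tiny, so instead I would condition on $a$ being in the first half of $P$ and its target being in the second half (as in Lemma~\ref{lem:recruit-succeed}), and among the second-half ants a constant fraction are outside $C(i,r)$ only if $|C(i,r)|$ is not almost all of $c(0,r)$ — which is \emph{not} guaranteed by the hypothesis. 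The fix: don't require $a'$ outside; instead observe that if $a$ recruits successfully then the \emph{total} home population behaves predictably, and use symmetry across nests. Concretely, I would argue $Y_r^i > 0$ whenever (i) $a$ recruits successfully an ant $a'$, and (ii) every other ant of $C(i,r)$ is neither recruited away nor... — this still has the same issue. The honest plan is therefore: take the event "$a$ recruits successfully" (probability $\geq 1/16$ by Lemma~\ref{lem:recruit-succeed}), intersect with "no other member of $C(i,r)$ is recruited away," bound the latter conditional probability below by a constant (roughly $(1-1/c(0,r))^{|C(i,r)|} \geq 1/4$, using independence of the relevant permutation positions after conditioning, with care), multiply to get $\geq 1/16 \cdot$ const $\geq 1/33$ for $P[Y_r^i > 0]$, and then invoke Lemma~\ref{lem:symm-nest} to also get $P[Y_r^i < 0] \geq 1/33 > 1/66$. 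The constant $1/66$ in the statement suggests the authors absorb a factor-of-two slack somewhere in exactly such a chained estimate, so I would aim the bookkeeping at roughly $2 \cdot 1/16 \cdot 1/4 = 1/32 \approx 1/33$ and not worry about the precise constant.
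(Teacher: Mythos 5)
Your reduction via Lemma~\ref{lem:symm-nest} to showing $P[Y_r^i = 0] \le 32/33$ is exactly the paper's framing, and invoking Lemma~\ref{lem:recruit-succeed} for a single fixed ant is also part of the paper's argument. But the final plan you settle on has a genuine gap at its last step. The event you construct is ``a fixed $a \in C(i,r)$ recruits successfully \emph{and} no other member of $C(i,r)$ is recruited away,'' and you propose to bound the second part below by roughly $(1-1/c(0,r))^{|C(i,r)|} \ge 1/4$. That expression is the probability that \emph{one} fixed ant escapes recruitment by $|C(i,r)|$ recruiters; the probability that all $|C(i,r)|-1$ ants simultaneously escape is more like $\left(1 - (|C(i,r)|-1)/c(0,r)\right)^{m}$ with $m$ the number of successful recruiters, which is exponentially small whenever $C(i,r)$ is a constant fraction of the home-nest population and many ants are recruiting --- precisely the typical situation. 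In that regime a constant fraction of $C(i,r)$ is recruited away in expectation, mostly by other members of $C(i,r)$, and the whole difficulty of the lemma is that these $+1$ and $-1$ contributions cancel; an event forbidding all $-1$'s cannot have constant probability. You correctly sensed the other dead end --- that $a$'s target lies outside $C(i,r)$ only with probability $(c(0,r)-|C(i,r)|)/c(0,r)$, which can be as small as $1/c(0,r)$ --- but your fallback does not escape the analogous trap.

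The paper avoids constructing any explicit constant-probability event forcing $Y_r^i > 0$. Instead it uses a distribution-preserving perturbation: run the recruitment with a uniform permutation $P_1$, then compose with a transposition $P_2$ that swaps a fixed ant $a^*\notin C(i,r)$ (which exists because $|C(i,r)| < c(0,r)$) with a uniformly random ant of $C(i,r)$. Conditioned on $Y_r^i = 0$, at most half the ants of $C(i,r)$ have $X_r^a = 1$, while $X_r^{a^*}=1$ with probability at least $1/16$ by Lemma~\ref{lem:recruit-succeed}; so with probability at least $1/32$ the swap strictly increases $Y_r^i$. Since $P_2 \circ P_1$ is still a uniform permutation, $Y_r^i$ has the same distribution before and after the swap, which forces $P[Y_r^i=0] \le 1 - \frac{1}{32}P[Y_r^i=0]$, i.e.\ $P[Y_r^i=0]\le 32/33$, and then Lemma~\ref{lem:symm-nest} gives $P[Y_r^i<0]\ge 1/66$. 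The missing idea in your proposal is exactly this coupling step, which needs only a single ant outside $C(i,r)$ rather than a constant fraction, and which sidesteps the cancellation problem entirely.
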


\begin{proof}
Let $P_1$ and $P_2$ be uniform random permutations. Consider using $P_1$ as the uniform permutation in in the recruitment process that determines $Y_r^i$. Permutation $P_2$ swaps the position of a fixed ant $a^*$ not committed to nest $n_i$ with the position of an ant $a$ chosen uniformly at random among the ants $C(i,r)$ (the ants committed to $n_i$). The existence of such an ant is guaranteed by the assumption that $|C(i,r)| < c(0,r)$.

By Lemma \ref{lem:recruit-succeed}, $P[X_r^{a^*} = 1] \geq 1/16$. Conditioning on $Y_r^i = 0$, at most $1/2$ the ants $a \in C(i,r)$ have $X_r^a = 1$.  Therefore, with probability at least $1/2$, the ant $a$ chosen uniformly at random by permutation $P_2$ has $X_r^a < 1$. Conditioning on $X_r^{a^*} = 1$ can only increase this probability. So with probability at least $(1/2)(1/16)$, applying the composition of $P_1$ and $P_2$ to compute $Y_r^i$ increases its value by at least $1$.  Since the composition of $P_1$ and $P_2$ is also a uniform random permutation, the distribution of $Y_r^i$ remains exactly the same as the case when only $P_1$ is applied. Therefore, 
\begin{eqnarray*}
P[Y_r^i = 0]  &\leq & 1 - \frac{1}{32} P[Y_r^i = 0] \\
P[Y_r^i = 0] &\leq & \frac{32}{33}.
\end{eqnarray*}

By Lemma \ref{lem:symm-nest}, $P[Y_r^i < 0] = P[Y_r^i > 0]$. Therefore, $P[Y_r^i < 0] \geq 1/66$.
\end{proof}

\begin{theorem}
\label{thm:optimal}
For any constant $c > 0$, with probability at least $1 - 1/n^c$, Algorithm \ref{algo:optimal} solves the \textsc{HouseHunting} problem in $\BO(\log n)$ rounds.
\end{theorem}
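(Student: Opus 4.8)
The plan is to split the analysis into two phases, matching the proof overview: a \emph{competition phase} in which the number of competing nests shrinks to one, and a \emph{cleanup phase} in which all passive ants are recruited to the unique winning nest. For the competition phase, I would argue as follows. By the description of the algorithm, at the rounds $r \in R_3$ the only ants at the home nest are those committed to currently-competing nests; a nest $n_i$ drops out of competition (its ants go passive) exactly when $Y_r^i < 0$. Lemma \ref{lem:sum-negative} tells us that any competing nest with $|C(i,r)| < c(0,r)$ — i.e.\ any competing nest when at least two nests are competing — drops out in a given block of four rounds with probability at least $1/66$. Moreover, since $\bv{o}_r$ has equally many $+1$'s and $-1$'s, $\sum_i Y_r^i = 0$, so not every competing nest can have $Y_r^i < 0$ simultaneously; hence at least one competing nest always survives, so the process never kills all nests. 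I would then let $N_t$ denote the number of competing nests after $t$ such four-round blocks and show that, conditioned on $N_t \geq 2$, the expected number of survivors is at most $(65/66)N_t$ plus one (the ``plus one'' accounting for the guaranteed survivor, handled carefully so as not to break the geometric decay); more cleanly, one shows each of the $N_t$ nests independently-enough has probability $\geq 1/66$ of dying, and a standard argument (e.g.\ a Chernoff/union-bound over the at-most-$k$ nests, or a potential-function argument on $N_t$) gives that after $O(\log k) = O(\log n)$ blocks, with probability $\geq 1 - 1/n^{c+1}$, only one nest competes.

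For the cleanup phase, once a single nest $n_w$ remains competing, its active ants detect this (Case 1 / the $count_h = count$ check, or simply the fact that no further decreases occur) and enter the $final$ state, where they call $\bv{recruit(1,nest)}$ every round. Passive ants, by construction, call $\bv{recruit(0,nest)}$ in exactly the rounds $r$ with $r \bmod 4 = 2$, which coincide with the $final$ ants' recruiting rounds, so in each such round every remaining passive ant at the home nest is a candidate to be recruited to $n_w$. Here I would invoke Lemma \ref{lem:recruit-succeed}: each $final$ ant recruiting in round $r$ succeeds with probability $\geq 1/16$ (as long as $c(0,r) \geq 2$), so while there are still $m \geq 1$ passive ants, the number of passive ants recruited in one cleanup round stochastically dominates a sum that, in expectation, is a constant fraction of $\min(\#final, m)$. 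Since the number of $final$ ants only grows and quickly becomes a constant fraction of $n$, the count of not-yet-recruited passive ants shrinks geometrically in expectation, so after $O(\log n)$ cleanup rounds all $n$ ants are in the $final$ state at $n_w$ with probability $\geq 1 - 1/n^{c+1}$. A union bound over the two phases gives the overall $1 - 1/n^c$ success probability in $O(\log n)$ rounds.

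The dependence issue is the recurring technical obstacle: in both phases the relevant indicator variables (which nest dies, which passive ant gets recruited) are not independent, because of the shared random permutation and the mutual-exclusion structure of $M$. I would handle this exactly as the lower bound does — by exhibiting, for each ant, a lower bound on the conditional probability of the relevant event given the outcomes of all ants earlier in the permutation, and then applying the domination lemma (Lemma 1.18 of \cite{doerr2011theory}, stated as \ref{lem:dependent-chernoff}) to replace the dependent sum by a sum of i.i.d.\ Bernoulli variables before invoking Chernoff. For the competition phase one further needs the fact that distinct nests' survival events, while not independent, are ``negatively-ish'' correlated enough — or, more robustly, one avoids this by tracking a single potential such as $\log N_t$ or the total competing population and showing it contracts geometrically with high probability each block.

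The step I expect to be the main obstacle is making the $O(\log k)$-round bound on the competition phase fully rigorous: Lemma \ref{lem:sum-negative} gives a per-nest constant death probability, but turning ``each of up to $k$ nests dies with constant probability, and at least one always survives'' into ``the count drops to $1$ in $O(\log k)$ rounds with high probability'' requires care about the correlations among which nests die and about the boundary case where only two or three nests remain (where the guaranteed-survivor constraint is most binding). I would resolve this by showing that whenever $N_t \geq 2$, each competing nest other than (conditionally) one distinguished survivor still has probability $\geq 1/66$ of dying, conditioned on earlier-in-permutation outcomes, so $N_{t+1} - 1$ is dominated by a Binomial$(N_t - 1, 1 - 1/66)$ variable; iterating and applying a Chernoff bound at each of the $O(\log k)$ steps, with a union bound over steps, yields the claim.
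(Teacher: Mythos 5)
Your decomposition into a competition phase and a cleanup phase, driven by Lemma \ref{lem:sum-negative} and Lemma \ref{lem:recruit-succeed} respectively, is exactly the paper's; the difference is in how the competition phase is closed. The paper avoids essentially all of the dependence machinery you invoke: it observes that if $k_r > 1$ nests are competing then each dies in the next block with probability at least $1/66$, so by linearity of expectation $\E[k_{r+4}] \le (65/66)\,k_r$ (no independence or domination lemma needed), iterates for $r = \log_{66/65}k + (c+1)\log_{66/65}n$ blocks to drive the expectation down to $n^{-(c+1)}$, and finishes with a Markov bound. Your per-step route via a Chernoff bound on a dominating $\mathrm{Binomial}(N_t-1,\,1-1/66)$ does not close the argument at the tail: once $N_t$ is a small constant, a Chernoff bound on one or two Bernoulli trials gives nothing, so the claim that ``$O(\log k)$ blocks suffice with probability $1-1/n^{c+1}$'' fails for small $k$ (e.g.\ $k=2$); you need $\Theta(\log k + c\log n)$ blocks, the extra $\log n$ term being exactly what kills the last straggler nest with high probability. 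Your alternative route --- each nest survives $T$ non-sole-competitor blocks with probability at most $(65/66)^T$, take $T=\Theta(\log n)$ and union-bound over the $k\le n$ nests --- does work and is a legitimate substitute for the paper's Markov argument. Two smaller points: the paper also spends a Chernoff bound on the initial search round to guarantee every good nest receives at least $n/(2k)$ ants (this is where the $k=O(n/\log n)$ hypothesis enters), which you omit; and the paper's treatment of the cleanup phase is a single sentence, so your sketch there (the $final$ population grows, then each remaining passive ant is recruited with constant probability per block, with dependence handled by Lemma \ref{lem:dependent-chernoff}) is actually more detailed than what the paper provides.
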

\begin{proof}
In the first round, all ants search for nests, so the expected number of ants located each good nest is $n/k$. Assuming $k \leq n/(12(c+1)\log n) = \BO(n/\log n)$, by a Chernoff bound, it follows that, with probability at least $1 - 1/n^{c+1}$, at least $n/(2k)$ ants visits each good nest.

Let $k_r$ be a random variable denoting the number of competing nests in round $r \in R_3$. Suppose $k_r > 1$, and so   $|C(i,r)| < c(0,r)$. By Lemma \ref{lem:sum-negative}, $P[Y_r^i < 0] \geq 1/66$ for each nest $n_i$ among the $k_r$ competing nests. Therefore, $\E[k_{r+4}] \leq (65/66) k_r$. Also, note that for any round $r$, $k_r \leq k$. For $r = \log_{66/65} k + (c+1) \log_{66/65} n = \BO(\log n)$, where $c>0$ is an arbitrary constant, it follows that $\E[k_r] = 1/n^{c+1}$. By a Markov bound, $P[k_r \geq 1] \leq 1/n^{c+1}$, so with probability at least $1 - 1/n^{c+1}$, $k_r \leq 1$. Note that in any given round, it is not possible for all nests to experience decrease in population, so for each round $r$, we have $k_r \geq 1$. So, overall, after $\BO(\log n)$ rounds, with probability $1 - 1/n^{c+1}$, there is exactly one competing nest.

After there is exactly one competing nest in some round $r \in R_3$, all ants committed to it switch to the $final$ state and start recruiting the passive ants during every fourth round. Each recruited ant also transitions to the final state, resulting in at most $\BO(\log n)$ rounds until all ants are committed to the winning nest.

Therefore, in total, with probability at least $1-1/n^c$ each nest is discovered by at least one ant and all ants are committed to a single good nest in $\BO(\log n)$ rounds.
\end{proof}

\section{Simple Algorithm}
\label{sec:natural}

We now give a very simple algorithm (Algorithm \ref{algo:simple}) that solves the \textsc{HouseHunting} problem  in $\BO(k \log n)$ rounds with high probability. The main idea of the algorithm is that all ants initially search for nests and those that find good nests simply continuously recruit to their nests with probability proportional to nest population in each round. 
Ants in larger nests recruit at higher rates, and eventually their populations swamp the populations of smaller nests. This process is similar to the well-known \emph{Polya's urn} model \cite{chung2003generalizations}.

\subsection{Algorithm Description and Pseudocode}

In each round of Algorithm \ref{algo:simple}, each ant can be in one of two states: $active$ or $passve$, initially starting in the $active$ state. In the first round of the algorithm, all ants search for nests; the ants that find good nests remain in the $active$ state, and the ants that find bad nests switch to the $passive$ state. Then, the algorithm proceeds in alternating rounds of recruitment by all ants at the home nest (either active $\bv{recruit(1, \cdot)}$ or passive $\bv{recruit(0, \cdot)}$), and population assessment at candidate nests. In each round of population assessment, each ant chooses to recruit actively in the next round with probability $count/n$, where $count$ is the assessed population at the candidate nest, and $n$ is the total number of ants. When a passive ant gets recruited to a nest, it commits to the new nest and becomes active again. When an active ant gets recruited to a different nest, it commits to the new nest and starts recruiting for that new nest.

\begin{algorithm}
\caption{Simple House-Hunting}
\label{algo:simple}
 \begin{multicols}{2}
$state:$ $\{active, passive\}$, initially $active$ \\
$<nest, count, quality> := \bv{search()}$ \\
\If{$quality = 0$} {
			$state := passive$
		}
\Case{$state = active$} {
				$b := 1$ with probability $\frac{count}{n}$, $0$ otherwise \\
				$nest := \bv{recruit(b, nest)}$  \\
			$count:= \bv{go(nest)}$ \\
	}
\Case{$state = passive$} {
		$nest_t := \bv{recruit(0,nest)}$ \\
		\If{$nest_t \neq nest$} {
			$state := active$ \\
			$nest := nest_t$ \\
			$count:= \bv{go(nest)}$ \\
		}
	}
\end{multicols}
\vspace{3mm}
\end{algorithm}

\subsection{Correctness Proof and Time Bound}
For each nest $n_i$ and each round $r$ let random variable $p(i,r) =c(i,r)/n$ denote the proportion of ants at nest $n_i$ in round $r$. By the pseudocode, ants located at nest $n_i$ in round $r$ will recruit with probability $p(i,r)$ in round $r+1$. Define $\Sigma^2(r) = \sum_{i=1}^{k} p(i,r)^2$, the expected proportion of ants that will recruit in total, in round $r$. Since $\sum_{i=1}^{k} p(i,r) = 1$, by the $\ell_1$ versus $\ell_2$ norm bound $\Sigma^2(r) \ge 1/k$. 

\paragraph{Proof Overview: } The correctness proof and time bound of the algorithm are structured as follows. First, in Lemmas \ref{lem:exp-x}, \ref{lem:switch-constants}, and \ref{lem:expected_change}, we show some basic bounds on the expected number of ants recruited in each round and the change of the value of $p(i,r)$ for a single nest in each round of recruiting. Then, in Lemmas \ref{lem:taylor_math}, \ref{lem:taylor-applied}, and \ref{lem:taylor_series}, we use a Taylor series expansion of the ratio between the populations of two nests to show that in expectation this ratio increases multiplicatively by $(1+\Omega(1/k))$ in each round of recruiting, provided that both nests have a $\Omega(1/k)$ fraction of the total population. On the other hand, if a nest has less than a $\Omega(1/k)$ fraction of the total population, in Lemmas \ref{lem:small_nest} and \ref{lem:small_nest_dropout}, we show that the ants in such a nest recruit so slowly that the nest completely drops out within $\BO(k \log n)$ rounds with high probability. Finally, in Theorem \ref{thm:simple_runtime}, we consider all $\binom{k}{2}$ pairs of nests to show that only a single nest will remain in each pair within $\BO(k \log n)$ rounds, ensuring, by a union bound, that a single nest remains overall.

Throughout this section, let $c > 0$ be an arbitrary constant and let $d$ be an arbitrary constant such that $d \geq 64$. For the sake of analysis, assume that $k \leq \sqrt{n/(8 d^2 (c+6) \log n)}  = \BO(\sqrt{n/\log n})$. While we feel that this assumption is reasonable, we are also hopeful that it could be removed.
Let $R_1$ be the set of all odd rounds, excluding round $1$; by the pseudocode, in these rounds, ants are located at the candidate nests. Therefore, for each $r \in R_1$ and each ant $a$, $\ell(a,r) = i \neq 0$; for each round $r' \not\in R_1$ and for each ant $a$, $\ell(a,r') = 0$.

\subsubsection{Change in Population of a Single Nest in One Round}
We first study how the population of a single nest changes in a single round. Intuitively, we expect a $p(i,r)$ proportion of the ants in $n_i$ to recruit, and a $\Sigma^2(r)$ proportion to be recruited. Therefore, we expect the population to change by $p(i,r)\left [p(i,r) - \Sigma^2(r) \right ]$. Qualitatively, we show this in Lemma \ref{lem:expected_change}, modulo constant factors; the main technical difficulty is handling dependencies in the random recruiting process. Before we are ready to prove Lemma \ref{lem:expected_change}, we show two lemmas that state the expected outcome of a single ant recruiting. 

As before, define random variable $X_r^a$ to take on value $-1$ if ant $a$ is recruited away from its current nest in round $r$, $1$ if it successfully recruits another ant, and $0$ otherwise. 

Note that the following lemma is stated in a slightly generalized way: it applies to an ant $a$ recruiting with a fixed probability $p$ in a given round $r$. By the pseudocode of the algorithm it is clear that whenever an ant recruits, it is always with probability $p(i,r)$, so the majority of the time we will apply the lemma for $p=p(i,r)$; however, the general statement of the lemma helps reason about the expected value of $X_r^a$ with respect to two ants from two different nests in the proof of Lemma \ref{lem:switch-constants}. 

\begin{lemma}
\label{lem:exp-x}
	Let $n_i$ be any nest and let ant $a$ be located in nest $n_i$ in some round $r \in R_1$. Also, suppose ant $a$ recruits with probability $p$ in round $r$. Then, there exist functions $\xi_1$ and $\xi_2$ such that $\E [X_r^a] = p \xi_1(i,r) - \xi_2(i,r)$ and $\xi_1(i,r) \ge \xi$ for a fixed constant $\xi > 0$.
\end{lemma}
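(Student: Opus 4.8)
Here is how I would approach the proof.

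The plan is to fix the configuration of the colony at the end of round $r$ (the nest, state, and $count$ value of every ant) and treat $\E[X_r^a]$ as an expectation over the randomness of round $r+1$; since $r\in R_1$ is odd and at least $3$, round $r+1$ is even, so as noted at the start of the section every ant is located at the home nest in round $r+1$, and in particular $c(0,r+1)=n\ge 2$, which makes Lemma~\ref{lem:recruit-succeed} applicable. Recall that $X_r^a$ records the effect of this recruitment on $a$: it is $+1$ if $a$ recruits another ant, $-1$ if $a$ is recruited away, and $0$ otherwise, and these outcomes are mutually exclusive by construction of Algorithm~\ref{model:recruit}, so $\E[X_r^a]=P[a\text{ recruits}]-P[a\text{ is recruited}]$ (all probabilities conditioned on the fixed configuration).

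First I would split on the bit $b$ drawn by $a$ in round $r+1$, which equals $1$ with probability $p$ and $0$ otherwise, independently of the rest of the randomness given the configuration. Let $\alpha_1=P[a\text{ recruits}\mid b=1]$, and let $\beta_1$ and $\beta_0$ be the probabilities that $a$ is recruited away given $b=1$ and given $b=0$, respectively; these quantities are determined by the configuration (and, by exchangeability of the ants in a common nest, depend on $a$ only through its nest index $i$). Since $a$ cannot recruit anyone when $b=0$,
\[
\E[X_r^a]\;=\;p\,\alpha_1-p\,\beta_1-(1-p)\,\beta_0\;=\;p\,(\alpha_1-\beta_1+\beta_0)-\beta_0,
\]
so the claimed form holds with $\xi_1(i,r)=\alpha_1-\beta_1+\beta_0$ and $\xi_2(i,r)=\beta_0$ (and if one insists on deterministic $\xi_1,\xi_2$, taking the outer expectation over the configuration works just as well and preserves the bound below). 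It then remains to show $\xi_1(i,r)\ge 1/16$, for which it suffices to prove (i) $\beta_0\ge\beta_1$ and (ii) $\alpha_1\ge 1/16$. Part (ii) is immediate: conditioned on $b=1$, ant $a$ executes $recruit(1,\cdot)$ in round $r+1$ and, since $c(0,r+1)\ge 2$, Lemma~\ref{lem:recruit-succeed} gives $\alpha_1\ge 1/16$.

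I expect part (i) — the intuitively clear statement that choosing to recruit can only lower an ant's chance of being recruited — to be the main obstacle, since the centralized matching process of Algorithm~\ref{model:recruit} introduces dependencies between ants. The natural approach is a coupling: fix the permutation $P$, the bits of all ants other than $a$, and every ant's uniformly random target (including the target $a$ would pick were it in $S$), and compare the run with $b_a=0$ to the run with $b_a=1$. These two runs evolve identically — building the same partial matching $M$ — up to the step at which $a$ is processed in $P$; in particular $a$ has been recruited by then in one run iff in the other. If $a$ has already been recruited, it stays recruited in both runs and we are done. Otherwise, in the $b_a=0$ run $a$ does nothing at its step, while in the $b_a=1$ run $a$ attempts to recruit its target: if the target is not free the step has no effect, so the two runs remain identical afterward and $a$ is recruited in one iff in the other; if the target is free, $a$ recruits it and is consequently never recruited in that run. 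Thus ``$a$ recruited in the $b_a=1$ run'' implies ``$a$ recruited in the $b_a=0$ run,'' i.e.\ $\beta_1\le\beta_0$. Combined with (ii) this yields $\xi_1(i,r)=\alpha_1-\beta_1+\beta_0\ge\alpha_1\ge 1/16$, proving the lemma with $\xi=1/16$.

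The one delicate point in the coupling is that the ``free'' test for $a$ — the condition $(a,\cdot)\notin M$ and $(\cdot,a)\notin M$ checked whenever another ant picks $a$ as its target — must be evaluated identically in the two runs: the $(\cdot,a)\notin M$ part is inherited from the agreement of the two runs on $M$ up to that step, and the $(a,\cdot)\notin M$ part holds in both runs because $a$ has not successfully recruited anyone by then (in the $b_a=1$ run we are in the branch where $a$'s own attempt failed). Checking this carefully is what makes the otherwise-obvious monotonicity claim (i) rigorous.
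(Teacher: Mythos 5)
Your proof is correct and follows essentially the same route as the paper's: both condition on all the randomness other than $a$'s own recruitment bit, arrive at $\E[X_r^a]=p(\alpha_1-\beta_1+\beta_0)-\beta_0$ (the paper's $\xi_1(i,r)=\E_B[2p_2^b+p_4^b]$ and $\xi_2(i,r)=\E_B[p_1^b+p_2^b]$ are exactly your $\alpha_1-\beta_1+\beta_0$ and $\beta_0$), and lower-bound $\xi_1$ via the constant success probability of Lemma~\ref{lem:recruit-succeed}. Your explicit coupling establishing $\beta_1\le\beta_0$ is the rigorous justification for the step the paper handles by asserting that two of the joint outcomes of $X_r^a$ under $A_r^a=0$ versus $A_r^a=1$ are impossible, so it is a careful execution of the same argument rather than a different one.
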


\begin{proof}
Let $A_r^a$ be an indicator variable indicating whether some ant $a$ chooses to recruit (executes $\bv{recruit(1,\cdot)}$) in round $r+1$. By assumption, $\Pr \left [A_r^a = 1 \right] = p$.
Whether ant $a$ is actually successful in recruiting another ant depends on: (1) the order of the random recruiting permutation $P$, (2) the choices of other ants to recruit or not, and (3) the choices by recruiting ants of whom to recruit. Let $B$ be a random variable encompassing all these random variables that affect $X_r^a$, excluding $A_r^a$. Therefore, $B$ is a triple $(P, \{A_r^{a'} | a' \neq a\}, \{1, \cdots, n\}^n)$ and it takes on values from the set $\mathcal{B}$ of all such triples. The expected value of $X_r^a$ is:

\begin{equation*}
\E [X_r^a] = \sum_{b \in \mathcal B} \Pr \left [B=b \right ] \cdot \left ( p \E \left [ X_r^a | B=b, A_r^a = 1 \right ] + (1-p) \E \left [ X_r^a | B=b,A_r^a = 0 \right ] \right ).
\end{equation*}

Fix some value $B = b$. We consider several cases based on the nine possible pairs of values of $A_r^a$ and $X_r^a$ ($A_r^a$ has two possible values and $X_r^a$ has three possible values). However, note that some permutations of these values are not allowed; that is, if $A_r^a = 0$ (the ant chooses not to recruit), it is not possible that $X_r^a=1$ (the ant succeeds in recruiting). Also, since $B=b$ is already fixed, ant $a$ is either chosen by another ant or not, regardless of the value of $A_r^a$. This fact rules out two more cases: (1) the case where $X_r^a = 0$ if $A_r^a = 0$, and $X_r^a = -1$ if $A_r^a = 1$, and (2) the case where $X_r^a = -1$ if $A_r^a = 0$, and $X_r^a = 0$ if $A_r^a = 1$. In (1), since the choices of the other ants are already fixed and included in $B=b$, it is not possible that ant $a$ gets recruited if it chooses to recruit but it does not get recruited otherwise. Similarly, in (2), it is not possible that ant $a$ gets recruited when it chooses not to recruit but does not get recruited when it fails to recruit.

All remaining cases are listed below:

\begin{enumerate}
\item[Case 1:] $X_r^a = -1$ for both $A_r^a = 0$ and $A_r^a = 1$. That is, the ant is recruited by another ant no matter its decision.
\item[Case 2:] $X_r^a = -1$ if $A_r^a = 0$, and $X_r^a = 1$ if $A_r^a = 1$. That is, if the ant chooses to recruit it succeeds and if not, it is recruited by another ant.
\item[Case 3:] $X_r^a = 0$ for both $A_r^a = 0$ and $A_r^a = 1$. That is, whether the ant chooses to recruit or not, it will not be part of a successful recruitment. 
\item[Case 4:] $X_r^a = 0$ if $A_r^a = 0$, and $X_r^a = 1$ if $A_r^a = 1$. That is, if the ant chooses to recruit it succeeds, but if not, it is not recruited by another ant.
\end{enumerate}

Let $p_c^b$ be the probability of case $c$ occurring given $B=b$. The expected value of $X_r^a$ is:
\begin{eqnarray*}
\E [X_r^a] &=& \sum_{b \in \mathcal B} \Pr \left [B=b \right ]  \cdot \left(p \left ( p_2^b + p_4^b - p_1^b \right ) + (1-p) \left ( -p_1^b-p_2^b \right ) \right)\\
&=& \sum_{b \in \mathcal B} \Pr \left [B=b \right ] \left (p \left (2p_2^b + p_4^b \right ) - \left ( p_1^b + p_2^b \right ) \right )\\
& =& p \E_B \left [2p_2^b + p_4^b \right ] - \E_B \left [ p_1^b + p_3^b \right ].
\end{eqnarray*}

Now, we need to show that $ \E_B \left [2p_2^b + p_4^b \right ] \ge \xi $ for some fixed constant $\xi > 0$.  
By Lemma \ref{lem:recruit-succeed}, each recruiting ant succeeds with some probability lower bounded by a constant. A recruiting ant only succeeds in cases $2$ and $4$ so we have $\E_B [p_2^b + p_4^b] \ge \xi$ for some constant $\xi > 0$. Therefore, $\E_B [2p_2^b + p_4^b] \ge\E_B [p_2^b + p_4^b] \ge \xi$.

Overall, $\E [X_r^a] = p \xi_1(i,r) - \xi_2(i,r)$, where $\xi_1(i,r) = \E_B [ 2p_2^b + p_4^b]$ is any constant greater than $\xi>0$ and $\xi_2(i,r) = \E_B[ p_1^b + p_2^b]$ is some unspecified value.
\end{proof}

Note that the values of $\xi_1(i,r)$ and $\xi_2(i,r)$ are slightly different for different nests because the behavior of an ant depends on the nest it is committed to. For example, the different probabilities corresponding to the cases in Lemma \ref{lem:exp-x} vary based on the population of a given nest in a given round and the populations of the remaining nests in that round. Next, we prove a bound on the $\xi_1(i,r)$ and $\xi_2(i,r)$ values of two different nests.

Fix a constant $\xi$ and functions $\xi_1$ and $\xi_2$ such that $\xi_1(i,r) \geq \xi$ for all rounds $r$ and all nests $n_i$.

\begin{lemma}
\label{lem:switch-constants}
Let $n_i$ and $n_j$ be two nests with $p(i,r) \le p(j,r)$ in some round $r \in R_1$. Then, $\xi_1(i,r) \cdot p(i,r)- \xi_2(i,r) \le \xi_1(j,r)\cdot p(i,r) - \xi_2(j,r)$.
\end{lemma}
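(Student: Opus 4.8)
The plan is to read both sides through Lemma~\ref{lem:exp-x}. By that lemma the left-hand side $\xi_1(i,r)p(i,r)-\xi_2(i,r)$ is precisely $\E[X_r^a]$ for an ant $a$ sitting in $n_i$ in round $r$ and recruiting with its natural probability $p(i,r)$, while the right-hand side $\xi_1(j,r)p(i,r)-\xi_2(j,r)$ is $\E[X_r^{a'}]$ for an ant $a'$ sitting in $n_j$ but \emph{damped} to recruit with the smaller probability $q:=p(i,r)$ rather than $p(j,r)$ (note $\xi_1(j,r),\xi_2(j,r)$ do not depend on the recruiting probability plugged into the formula, so this reinterpretation is legitimate). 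Thus the statement to prove is: if the focal ant's recruiting probability is pinned at $q$, then an ant in the larger-population nest $n_j$ expects a signed outcome at least as large as an ant in $n_i$ does.

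First I would reduce this to one clean inequality using a hybrid world plus a symmetry. Fix the two ants $a\in n_i$ and $a'\in n_j$ in the actual round-$r$ configuration and let $q=p(i,r)$. Let $H$ be the world in which $a$ is the focal ant recruiting with probability $q$, the ant $a'$ is damped to recruit with probability $q$, and every other ant recruits with its natural probability. I claim (i) $\E[X_r^a\ \text{(actual process)}]\le\E[X_r^a\ \text{(world }H)]$, and (ii) $\E[X_r^a\ \text{(world }H)]=\E[X_r^{a'}\ \text{(world defining the RHS)}]$. Claim (ii) is pure relabeling: in $H$ and in the RHS world exactly two named ants ($a$ and $a'$) recruit with probability $q$ and all others recruit with their natural probabilities, and the recruitment process of Algorithm~\ref{model:recruit} is invariant under permuting the identities of the ants (the permutation $P$ and the targets are uniform), so the transposition $a\leftrightarrow a'$ carries $H$ onto the RHS world and sends focal ant to focal ant. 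Hence the lemma is equivalent to Claim (i): \emph{damping a single non-focal ant's active-recruitment probability does not decrease the focal ant's expected signed outcome.}

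Claim (i) is the real work, and it must be argued in expectation rather than sample-by-sample: because Algorithm~\ref{model:recruit} contains cascades (dropping an ant from the recruiting set frees the ant it would have grabbed, which may then grab the focal ant), the naive monotone coupling can flip the focal outcome on individual samples. I would split $\E[X_r^{\mathrm{focal}}]$ into its ``grabs someone'' part and its ``is grabbed'' part exactly as the proof of Lemma~\ref{lem:exp-x} writes it as $q\,\xi_1-\xi_2$, so that it suffices to prove the two monotonicities $\xi_1(j,r)\ge\xi_1(i,r)$ and $\xi_2(j,r)\le\xi_2(i,r)$ under the one-ant change of environment (fewer active recruiters can only help the focal ant succeed, and can only reduce its chance of being grabbed). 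Each of these I would establish with a measure-preserving swap on the leftover randomness --- the permutation and the target vector --- in the same spirit as Lemmas~\ref{lem:symm-nest} and~\ref{lem:sum-negative}, exhibiting an injection from the samples in which the focal ant is grabbed (resp.\ fails) in the damped environment into the corresponding samples in the undamped environment; the point is that the first-order contribution of a single recruiter to these events is $\Theta(1/n)$ and dominates the $O(1/n^2)$ cascade corrections, which the swap makes precise. Granting $\xi_1(j,r)\ge\xi_1(i,r)$, $\xi_2(j,r)\le\xi_2(i,r)$ and $p(i,r)>0$, adding $p(i,r)\xi_1(j,r)\ge p(i,r)\xi_1(i,r)$ to $-\xi_2(j,r)\ge-\xi_2(i,r)$ gives exactly the claimed inequality. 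I expect essentially all the difficulty to sit in those two one-ant monotonicities of $\xi_1$ and $\xi_2$; the rest is bookkeeping and symmetry.
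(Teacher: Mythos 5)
Your proposal follows essentially the same route as the paper: the paper likewise compares $\E[X_r^{a_i}]$ in the actual process with a hypothetical world in which $a_i$ and $a_j$ swap roles and both recruit with probability $p(i,r)$, identifies that world's expectation with $p(i,r)\,\xi_1(j,r)-\xi_2(j,r)$ by exchangeability (your Claim (ii)), and concludes by arguing that lowering $a_j$'s recruitment probability can only increase $a_i$'s chance of recruiting successfully and decrease its chance of being recruited (your Claim (i), split into the monotonicities of $\xi_1$ and $\xi_2$). The only difference is one of rigor rather than route: the paper asserts that last monotonicity in a single sentence, whereas you correctly identify it as the real content and note that cascades in the pairing process block a naive pointwise coupling, sketching (without completing) a measure-preserving swap argument for it.
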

\begin{proof}
Let $a_i$ and $a_j$ be two ants located in nests $n_i$ and $n_j$, respectively, in round $r \in R_1$. Ant $a_i$ recruits with probability $p(i,r)$ and ant $a_j$ recruits with probability $p(j,r)$ where $p(i,r) \le p(j,r)$. By Lemma \ref{lem:exp-x}, applied to ant $a_i$, recruiting with probability $p(i,r)$, and nest $n_i$, it follows that $\E[X_r^{a_i}] = p(i,r) \xi_1(i,r) - \xi_2(i,r)$.

 Suppose we swap the locations of ants $a_i$ and $a_j$ and let both ants recruit with probability $p(i,r)$; since the ants just switch locations, the populations of the two nests remain the same, so the rest of the ants still recruit with probability $p(i,r)$ in nest $n_i$ and $p(j,r)$ in nest $n_j$\footnote{Here it is not important to which nests ants are recruiting and bringing other ants; we just want to reason about the value of $X_r^{a_i}$. Therefore, it is not alarming that ant $a_i$ is located at nest $n_j$ and recruiting with probability $p(i,r)$.}. By Lemma \ref{lem:exp-x}, applied to ant $a_i$, recruiting with probability $p(i,r)$, and nest $n_j$, it follows that the expected value of $X_r^{a_i}$ in this hypothetical situation is $\E'[X_r^{a_i}] = p(i,r) \xi_1(j,r) - \xi_2(j,r)$. 

As a result of switching the positions of ants $a_i$ and $a_j$, we are essentially lowering $a_j$'s probability to recruit. Therefore, the probability that ant $a_i$ succeeds recruiting, if it chooses to recruit, increases, and the probability that ant $a_i$ is recruited, if it chooses not to recruit, decreases. 

Therefore, $\E[X_r^{a_i}] \leq \E'[X_r^{a_i}]$, and so $p(i,r)\xi_1(i,r) - \xi_2(i,r) \le p(i,r)\xi_1(j,r) - \xi_2(j,r)$.
\end{proof}

Next, we use Lemma \ref{lem:exp-x} to calculate the expected change in the value of $p(i,r)$ after one round of recruiting.

\begin{lemma}\label{lem:expected_change}
For each nest $n_i$ and each round $r \in R_1$:
\begin{equation*}
 	\E [p(i,r+2)] = p(i,r)\cdot \left [ 1+ \xi_1(i,r) \cdot p(i,r) - \xi_2(i,r) \right ].
 \end{equation*}
\end{lemma}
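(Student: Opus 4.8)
The plan is to track the population of nest $n_i$ by summing the per-ant contributions $X_r^a$ over the ants currently at $n_i$, and to take expectations using Lemma \ref{lem:exp-x}. Concretely, since round $r \in R_1$ is a round in which ants are at candidate nests and round $r+1$ is the ensuing recruitment round, the population at the end of round $r+2$ is $c(i,r+2) = c(i,r) + \sum_{a : \ell(a,r) = i} X_r^a$. Indeed, an ant contributing $+1$ (it successfully recruits) brings one extra ant to $n_i$; an ant contributing $-1$ (it gets recruited away) leaves $n_i$; ants contributing $0$ stay put and bring no one. (One should double-check the bookkeeping for passive ants versus active ants, but since recruitment always relocates the recruited ant to the recruiter's nest and the recruiter returns to its own nest, the net change to $n_i$'s count is exactly $\sum_{a:\ell(a,r)=i} X_r^a = Y$, the same kind of sum appearing in the optimal-algorithm analysis.)

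Next I would take expectations. Each ant $a$ with $\ell(a,r) = i$ recruits with probability exactly $p(i,r)$ by the pseudocode, so Lemma \ref{lem:exp-x} applies with $p = p(i,r)$ and gives $\E[X_r^a] = p(i,r)\,\xi_1(i,r) - \xi_2(i,r)$, a quantity that depends only on $i$ and $r$, not on which ant $a$ we picked among those at $n_i$. There are $c(i,r)$ such ants, so by linearity of expectation
\begin{equation*}
\E[c(i,r+2) \mid c(i,r)] = c(i,r) + c(i,r)\bigl(p(i,r)\,\xi_1(i,r) - \xi_2(i,r)\bigr).
\end{equation*}
Dividing through by $n$ and using $p(i,\cdot) = c(i,\cdot)/n$ yields
\begin{equation*}
\E[p(i,r+2)] = p(i,r)\bigl(1 + \xi_1(i,r)\,p(i,r) - \xi_2(i,r)\bigr),
\end{equation*}
which is the claimed identity. (Here I am writing $\E[p(i,r+2)]$ for the conditional expectation given the configuration at round $r$, matching the paper's convention.)

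The one subtlety worth care — and the part most likely to need a careful sentence — is the claim that $\E[X_r^a]$ is the same for every ant at $n_i$, and the claim that $c(i,r+2) - c(i,r)$ equals $\sum_{a:\ell(a,r)=i} X_r^a$ exactly. For the first: Lemma \ref{lem:exp-x}'s statement of $\E[X_r^a]$ depends on $a$ only through its nest $n_i$, its recruiting probability $p$, and the round $r$; since all ants at $n_i$ share these, their $X_r^a$ are identically distributed, so identical in expectation. For the second: I would argue that every unit of change in $c(i,\cdot)$ across a recruitment round is attributable to exactly one successful recruitment pair $(a,a') \in M$, contributing $+1$ to the count of $a$'s nest (via $a'$ moving there) and $-1$ to the count of $a'$'s nest, and these are exactly the events recorded by $X_r^{a} = 1$ and $X_r^{a'} = -1$. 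Ants that call $\bv{go}$ or search in the intervening rounds do not move between candidate nests in a way that affects the bookkeeping at the relevant round indices because of the alternating structure of $R_1$. Once that accounting is pinned down, the rest is just linearity of expectation and dividing by $n$.
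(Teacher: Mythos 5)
Your proposal is correct and follows essentially the same route as the paper's proof: decompose the population change as $\sum_{a:\ell(a,r)=i} X_r^a$, apply linearity of expectation, note that $X_r^a$ is identically distributed for all ants at $n_i$, and substitute the expression for $\E[X_r^a]$ from Lemma \ref{lem:exp-x} with $p = p(i,r)$. Your extra care about the bookkeeping of recruitment pairs and the conditioning on the round-$r$ configuration only makes explicit what the paper leaves implicit.
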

\begin{proof} 
 By linearity of expectation, 
\begin{equation*}
\E [p(i,r+2)] = p(i,r) + \frac{1}{n} \sum_{\{a | \ell(a,r) = i\}} \E [X_r^a] = p(i,r) \cdot \left (1+\E [X_r^a]  \right ),
\end{equation*}

where in the last equation $a$ is a fixed ant in nest $n_i$ and the equality follows from the fact that $X_r^a$ is identically distributed for each ant in the same nest. 

By Lemma \ref{lem:exp-x}, $\E [X_r^a] = p(i,r) \xi_1(i,r) - \xi_2(i,r)$, so:

\begin{equation*}
\E [p(i,r+2)] = p(i,r) \cdot \left (1+p(i,r) \xi_1(i,r) - \xi_2(i,r)  \right ). \qedhere
\end{equation*}
\end{proof}

Lemma \ref{lem:expected_change} shows that the population change in a single nest depends \emph{quadratically} on $p(i,r)$. With this fact, we can show that larger nests will tend to `swamp' smaller nests, causing their populations to drop to $0$. 

\subsubsection{Relative Changes in the Populations of Two Nests in One Round}

We first define a measurement of the population gap between two nests.

\begin{definition}
For any two nests $n_i$ and $n_j$, let $n_H(i,j,r) \in \{i, j \}$ be the id of the nest with the higher population in round $r$. Let $n_L(i,j,r)$ be the nest with the lower population. For simplicity of notation let $p_H(i,j,r) = p(n_H(i,j,r),r)$ and $p_L(i,j,r) = p(n_H(i,j,r),r)$. Define:
\begin{align*}
\epsilon(i,j,r) = \frac{p_H(i,j,r)}{p_L(i,j,r)} - 1.
\end{align*}
\end{definition}

That is, $\epsilon(i,j,r)$ is the relative population gap between the larger and smaller of nests $n_i$ and $n_j$ in round $r$. 

\begin{lemma}
\label{lem:eps-bound}
	For any two nests $n_i$ and $n_j$, $\E [\epsilon(i,j, 1)] \geq 1/(3(n-1))$.
\end{lemma}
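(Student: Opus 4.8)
The plan is to analyze the first round directly, since round $1$ is the search round where every ant independently picks one of the $k$ nests uniformly at random. Fix two nests $n_i$ and $n_j$. The only randomness relevant to $\epsilon(i,j,1)$ is how many of the $n$ ants land in $n_i$ versus $n_j$; ants landing in other nests are irrelevant. Conditioning on the event that exactly $m = c(i,1) + c(j,1)$ ants land in $\{n_i, n_j\}$, each such ant is in $n_i$ or $n_j$ with probability $1/2$ independently, so $c(i,1)$ is distributed as $\mathrm{Bin}(m, 1/2)$. Since $\epsilon$ is defined as the ratio of larger to smaller population minus $1$, and this quantity is scale-free, I would work with the pair $(c(i,1), c(j,1))$ directly: $\epsilon(i,j,1) = \max(c(i,1),c(j,1))/\min(c(i,1),c(j,1)) - 1$, which is a nonnegative random variable, and I need to show its expectation is at least $1/(3(n-1))$.

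The key step is a lower bound on $\E[\epsilon(i,j,1) \mid m]$ for each fixed $m \geq 1$ (the case $m = 0$ contributes $0$ but also should be handled — presumably the convention makes $\epsilon = 0$ or the event is negligible; I would note that with the assumption of at least one good nest and $n$ large this is a low-probability corner, or simply observe that $\E[\epsilon] \geq \E[\epsilon \mathbf{1}_{m \geq 1}]$ suffices). For fixed $m$, since $c(i,1) \sim \mathrm{Bin}(m,1/2)$, the simplest route is to bound $\E[\epsilon(i,j,1)\mid m]$ from below by the contribution of the event where the split is as uneven as possible by exactly one ant, or more cleanly, to use the crude bound $\max/\min - 1 \geq (\max - \min)/\min \geq (\max-\min)/m \geq |2c(i,1) - m|/m$, and then lower bound $\E[|2c(i,1)-m|]/m$. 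For a symmetric binomial, $\E[|2\,\mathrm{Bin}(m,1/2) - m|]$ is the mean absolute deviation, which is $\Theta(\sqrt{m})$ and in particular at least a constant (at least $1/2$ for all $m\ge 1$, with equality-ish at $m=1$ where it equals $1$). That gives $\E[\epsilon \mid m] \geq c_0/m$ for an absolute constant $c_0$, and then I would take expectation over $m$ and bound $\E[1/m]$ (over $m \geq 1$) from below.

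The main obstacle — and the place where the constant $1/(3(n-1))$ comes from — is getting a clean lower bound on $\E[c_0/m]$, or more precisely avoiding the need for it by being smarter. Since $m = c(i,1)+c(j,1) \leq n$ always, we have $1/m \geq 1/n$, but that only gives $\Omega(1/n)$ with a worse constant unless $c_0$ is tracked carefully; the bound $1/(3(n-1))$ suggests the authors instead condition more carefully, perhaps using that $m$ itself is $\mathrm{Bin}(n, 2/k)$ and is concentrated, or — more likely — they use the sharper observation that conditioned on $m \geq 1$, $\E[\epsilon \mid m=1] = 1$ (one nest gets the single ant, ratio is $1/0$... so actually $m\ge 2$ is needed for $\epsilon$ to be finite) combined with $\Pr[m \geq 2]$ and a per-ant argument. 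I would therefore fall back on the robust version: condition on two specific ants $a_1, a_2$ (e.g. ants $1$ and $2$); with probability $1/k^2 \cdot$ (stuff) they are the only relevant configuration... Actually the cleanest path matching the target constant: note $\epsilon(i,j,1) \geq \max/\min - 1 \geq 1/\min(c(i,1),c(j,1)) \cdot \mathbf{1}[c(i,1) \neq c(j,1)]$ when the gap is at least $1$, and $\min \leq n-1$, so $\epsilon \geq \frac{1}{n-1}\mathbf{1}[c(i,1)\neq c(j,1)]$; then I must show $\Pr[c(i,1) \neq c(j,1)] \geq 1/3$, which reduces to showing a symmetric binomial $\mathrm{Bin}(m,1/2)$ avoids its exact midpoint with probability at least $1/3$ (true: $\Pr[\mathrm{Bin}(m,1/2) = m/2] \leq \binom{m}{m/2}/2^m \leq 1/\sqrt{2}$ is too weak, but for $m$ random and typically large it is $o(1)$, and even in the worst case $m=2$ it is $1/2$, giving $\Pr[\text{not equal}] = 1/2 \geq 1/3$; the $m \le 1$ cases force inequality or are negligible). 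Assembling: $\E[\epsilon(i,j,1)] \geq \frac{1}{n-1}\Pr[c(i,1)\neq c(j,1)] \geq \frac{1}{3(n-1)}$. The genuinely fiddly part is verifying $\Pr[c(i,1)\neq c(j,1)] \geq 1/3$ uniformly, handling the small-$m$ and $m=0$ boundary cases, which I expect to dispatch with the observation that $m$ is itself random and the midpoint-hitting probability, averaged over the distribution of $m$, is comfortably below $2/3$.
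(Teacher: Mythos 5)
Your final assembled argument is exactly the paper's: lower-bound $\epsilon(i,j,1)$ by $\tfrac{1}{n-1}$ on the event $c(i,1)\neq c(j,1)$, then show the two nests tie with probability at most $2/3$ (the paper does this via a Stirling bound on the central binomial coefficient $\binom{x}{x/2}2^{-x}$, where you instead check the worst case $m=2$ directly — a harmless difference). Both versions share the same minor looseness around the $m\le 1$ boundary cases, so the proposal is correct and essentially identical in approach.
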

\begin{proof}
We need to bound the expected difference between two nests after the first round of searching. If nests $n_i$ and $n_j$ receive the same number of ants after the initial round of searching, then, by definition, $\epsilon(i,j, 1) = 0$. Let $2 \leq x \leq n$ be the total number of ants that find either nest $n_i$ or nest $n_j$ in the first round of searching. Note that if $x=1$, it is not possible for both nests to receive the same number of ants, and if $x=0$, then the nests are already empty. The probability that both nests receive exactly $x/2$ ants each is:
\begin{equation*}
	\Pr[\epsilon(i,j, 1) = 0] = \binom{x}{x/2} \left(\frac{1}{2}\right)^{x/2} \left(\frac{1}{2}\right)^{x/2} \leq \left(\frac{e2^x}{\pi \sqrt{x}}\right) \frac{1}{2^x} \leq \frac{e}{\pi \sqrt{x}} \leq \frac{e}{\pi \sqrt{2}} < \frac{2}{3}, 
\end{equation*}
where in the second step we use the Stirling approximation $\sqrt{2\pi} x^{x+1/2} e^{-x} \leq x! \leq e x^{x+1/2} e^{-x}$. Conditioning on $\epsilon(i,j, 1) \neq 0$, the smallest possible value of $\epsilon(i,j, 1)$ is $1/(n-1)$, and so $\E [\epsilon(i,j, 1)|\epsilon(i,j, 1) \neq 0] \geq 1/(n-1)$. Therefore, 
\begin{equation*}
\E [\epsilon(i,j, 1)] \geq \E [\epsilon(i,j, 1)|\epsilon(i,j, 1) \neq 0] \cdot \Pr[\epsilon(i,j, 1) \neq 0] \geq \frac{1}{3(n-1)}. \qedhere
\end{equation*}
\end{proof}

Since $\epsilon(i,j,r)$ is a ratio of two populations, it is not immediately clear how to compute its expected change between rounds. However, using a Taylor series expansion we are able to \emph{linearize} this ratio, and then use Lemma \ref{lem:expected_change} to compute its expected change. We first show how to use the Taylor series expansion in our setting.

\begin{lemma}\label{lem:taylor_math}
Consider positive numbers $x_0$, $x_1$, $y_0$, and $y_1$. Let $\Delta x = x_1 - x_0$, $\Delta y = y_1 - y_0$, and $\Delta \left ( \frac{x}{y} \right ) = \frac{x_1}{y_1} - \frac{x_0}{y_0}$. Then:
\begin{align}\label{eq:main_taylor}
\Delta \left ( \frac{x}{y} \right ) = \left [\frac{\Delta x}{y_0}- \frac{x_0\Delta y}{y_0^2} \right] \cdot  \sum_{i=0}^\infty \left (\frac{-\Delta y}{y_0} \right )^i
\end{align}
\end{lemma}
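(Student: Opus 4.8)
The plan is to verify the identity \eqref{eq:main_taylor} by direct algebraic manipulation, using the geometric series to represent $1/y_1$ in terms of $1/y_0$ and $\Delta y$. The key observation is that $\frac{x_1}{y_1} - \frac{x_0}{y_0}$ can be split into two pieces: one that accounts for the change in the numerator and one for the change in the denominator. Concretely, I would write
\begin{equation*}
\frac{x_1}{y_1} - \frac{x_0}{y_0} = \frac{x_1}{y_1} - \frac{x_0}{y_1} + \frac{x_0}{y_1} - \frac{x_0}{y_0} = \frac{\Delta x}{y_1} + x_0 \left( \frac{1}{y_1} - \frac{1}{y_0} \right).
\end{equation*}

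Next I would handle the factor $1/y_1 = 1/(y_0 + \Delta y)$. Writing $\frac{1}{y_1} = \frac{1}{y_0} \cdot \frac{1}{1 + \Delta y / y_0} = \frac{1}{y_0}\sum_{i=0}^\infty (-\Delta y/y_0)^i$ (valid as a formal/convergent series when $|\Delta y| < y_0$, which I would note as the standing assumption, or alternatively treat it purely formally), one gets $\frac{1}{y_1} - \frac{1}{y_0} = \frac{1}{y_0}\left(\sum_{i=0}^\infty (-\Delta y/y_0)^i - 1\right) = \frac{1}{y_0}\sum_{i=1}^\infty (-\Delta y/y_0)^i = -\frac{\Delta y}{y_0^2}\sum_{i=0}^\infty(-\Delta y/y_0)^i$. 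Substituting both expressions back and factoring out the common sum $\sum_{i=0}^\infty(-\Delta y/y_0)^i$ — noting that $\frac{\Delta x}{y_1} = \frac{\Delta x}{y_0}\sum_{i=0}^\infty(-\Delta y/y_0)^i$ as well — yields exactly
\begin{equation*}
\Delta\left(\frac{x}{y}\right) = \left[\frac{\Delta x}{y_0} - \frac{x_0 \Delta y}{y_0^2}\right]\sum_{i=0}^\infty\left(\frac{-\Delta y}{y_0}\right)^i,
\end{equation*}
which is the claimed identity.

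There is no real obstacle here; this is a routine computation. The only point requiring a word of care is the convergence of the geometric series, i.e. that $|\Delta y / y_0| < 1$. In the intended application $y_0, y_1$ will be (scaled) nest populations in consecutive rounds in which the quantity changes by at most one ant, so $\Delta y$ is tiny relative to $y_0$ whenever the nest is nonempty; I would simply state this as a hypothesis ("assume $|\Delta y| < y_0$") or observe that the identity holds as an identity of formal power series regardless. I expect the subsequent lemmas (\ref{lem:taylor-applied}, \ref{lem:taylor_series}) to do the real work of bounding the tail of this series and combining it with Lemma \ref{lem:expected_change}; the present lemma is just the algebraic scaffolding that linearizes the ratio.
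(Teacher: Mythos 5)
Your proof is correct, but it takes a genuinely different (and more elementary) route than the paper. The paper proves \eqref{eq:main_taylor} by expanding $f(x,y)=x/y$ in a full multivariate Taylor series about $(x_0,y_0)$ and showing by induction that the degree-$k$ homogeneous part collapses to $\left[\frac{\Delta x}{y_0}-\frac{x_0\Delta y}{y_0^2}\right]\left[\frac{-\Delta y}{y_0}\right]^{k-1}$, then summing over $k$. You instead telescope $\frac{x_1}{y_1}-\frac{x_0}{y_0}=\frac{\Delta x}{y_1}+x_0\bigl(\frac{1}{y_1}-\frac{1}{y_0}\bigr)$ and expand the single factor $1/y_1=1/(y_0+\Delta y)$ as a scalar geometric series; this reaches the same identity with a two-line computation and no induction. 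Your version is arguably cleaner: it makes the convergence condition $|\Delta y|<y_0$ explicit, whereas the paper's lemma states the series identity without any hypothesis guaranteeing convergence (the needed bound only appears downstream, in Lemma \ref{lem:taylor-applied}, and even there only as the one-sided condition $\Delta p_L < p_L$). One small caveat on your closing remark: in this model a single round of recruiting can move many ants, not one, so $\Delta y$ is not automatically tiny relative to $y_0$; the justification that $|\Delta p_L/p_L|<1$ really does have to come from the hypotheses of the later lemmas rather than from a ``one ant per round'' intuition. That does not affect the correctness of your proof of the present lemma.
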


\begin{proof}
We can expand $\Delta \left ( \frac{x}{y} \right )$ using a Taylor series for $f(x,y) = \frac{x}{y}$. To save notation, we write $\frac{\partial^k f}{\partial x^i \partial y^{k-i}}$ to denote $\frac{\partial^k f}{\partial x^i \partial y^{k-i}}\left(x_0,y_0 \right )$. We have:
\begin{align*}
\Delta \left ( \frac{x}{y} \right ) = \frac{\partial f}{\partial x} \Delta x +  \frac{\partial f}{\partial y} \Delta y + \frac{1}{2!} \left [\frac{\partial^2 f}{\partial x^2} \Delta x^2 +  2\frac{\partial^2 f}{\partial x \partial y} \Delta x \Delta y + \frac{\partial^2 f}{\partial y^2} \Delta y^2 \right] + \frac{1}{3!} \left [ \frac{\partial^3 f}{\partial x^3} \Delta x^3 + ... \right ] + ...
\end{align*}
We will show by induction that, for all $k$:
\begin{align}\label{eq:taylor_induction}
\frac{1}{k!} \left [ \frac{\partial^k f}{\partial x^k} \Delta x^k + k  \frac{\partial^k f}{\partial x^{k-1}\partial y} \Delta x^{k-1} \Delta y + ... +   \frac{\partial^k f}{\partial y^k} \Delta y^k  \right ] =  \left [\frac{\Delta x}{y_0}- \frac{x_0\Delta y}{y_0^2} \right] \left [ \frac{-\Delta y}{y_0} \right ]^{k-1}.
\end{align}
Plugging into the Taylor series expansion, this immediately gives \eqref{eq:main_taylor}. 
For $k = 1$, we differentiate $f(x,y) = \frac{x}{y}$ to get our base case:
\begin{align*}
 \frac{\partial f}{\partial x} \Delta x +  \frac{\partial f}{\partial y} \Delta y = \frac{\Delta x}{y_0}- \frac{x_0\Delta y}{y_0^2}.
\end{align*}

Now for $k > 1$, by the inductive assumption we have:
\begin{align*}
\frac{1}{(k-1)!} \left [ \frac{\partial^{k-1} f}{\partial x^{k-1}} \Delta x^{k-1} + ... +   \frac{\partial^{k-1} f}{\partial y^{k-1}} \Delta y^{k-1}  \right ] =  \left [\frac{\Delta x}{y_0}- \frac{x_0\Delta y}{y_0^2} \right] \left [ \frac{-\Delta y}{y_0} \right ]^{k-2} = \frac{\Delta x(-\Delta y)^{k-2}}{y_0^{k-1}}+ \frac{x_0(-\Delta y)^{k-1}}{y_0^k}.
\end{align*}
 Differentiating gives:
\begin{align*}
\frac{1}{k!} \left [ \frac{\partial^k f}{\partial x^k} \Delta x^k + ... +   \frac{\partial^k f}{\partial y^k} \Delta y^k  \right ] &= \frac{1}{k} \left [ \frac{(-\Delta y)^{k-1}}{y_0^k}\Delta x - \frac{(k-1)\Delta x (-\Delta y)^{k-2}}{y_0^k} \Delta y - \frac{k x_0 (-\Delta y)^{k-1}}{y_0^{k+1}} \Delta y  \right]\\
&=  \left [ \frac{\Delta x \cdot (-\Delta y)^{k-1}}{y_0^k} + \frac{x_0 \Delta x (-\Delta y)^{k}}{y_0^{k+1}}  \right] =  \left [\frac{\Delta x}{y_0}- \frac{x_0\Delta y}{y_0^2} \right] \left [ \frac{-\Delta y}{y_0} \right ]^{k-1}.
\end{align*}
\end{proof}

For simplicity of notation, write $p_H =  p(n_H(i,j,r),r)$, $p_L =  p(n_L(i,j,r),r)$. Additionally, write $\Delta p_H = p(n_H(i,j,r),r+2) - p(n_H(i,j,r),r)$, $\Delta p_L = p(n_L(i,j,r),r+2) - p(n_L(i,j,r),r)$ and $\Delta \left (\frac{p_H}{p_L} \right ) = \frac{p(n_H(i,j,r),r+2)}{p(n_L(i,j,r),r+2)} - \frac{p(n_H(i,j,r),r)}{p(n_L(i,j,r),r)}$. We can now apply Lemma \ref{lem:taylor_math} to show:

\begin{lemma}
\label{lem:taylor-applied}
	Let $n_i$ and $n_j$ be two nests with $\Delta p_L < p_L$ for some round $r \in R_1$. Then, $\Delta \left (\frac{p_H}{p_L} \right) \ge \frac{1}{2} \left( \frac{\Delta p_H}{p_L}- \frac{p_H\Delta p_L}{p_L^2} \right)$.
\end{lemma}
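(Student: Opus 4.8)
The plan is to get this straight out of the Taylor-series identity in Lemma~\ref{lem:taylor_math}. I would instantiate that lemma with $x_0 = p_H = p(n_H(i,j,r),r)$, $x_1 = p(n_H(i,j,r),r+2)$, $y_0 = p_L = p(n_L(i,j,r),r)$, and $y_1 = p(n_L(i,j,r),r+2)$, so that $\Delta x = \Delta p_H$, $\Delta y = \Delta p_L$, and $\Delta(x/y) = \Delta(p_H/p_L)$. Here I would assume all four quantities are positive: both nests have positive population in round $r$ (otherwise $\epsilon(i,j,r)$ is not even defined), and the degenerate cases where a nest empties in round $r+2$ I would set aside to be handled separately. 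Equation~\eqref{eq:main_taylor} then reads
\begin{equation*}
\Delta\!\left(\frac{p_H}{p_L}\right) = \left[\frac{\Delta p_H}{p_L} - \frac{p_H\,\Delta p_L}{p_L^2}\right]\sum_{i=0}^{\infty}\left(\frac{-\Delta p_L}{p_L}\right)^{i}.
\end{equation*}

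Next I would verify that this geometric series converges and bound it. Convergence needs $\lvert \Delta p_L/p_L\rvert < 1$: the bound $\Delta p_L/p_L < 1$ is exactly the hypothesis $\Delta p_L < p_L$, and $\Delta p_L/p_L > -1$ because $p(n_L(i,j,r),r+2) = p_L + \Delta p_L > 0$. Hence the sum equals $\big(1 + \Delta p_L/p_L\big)^{-1} = p_L/p(n_L(i,j,r),r+2)$, which is positive, and since $1 + \Delta p_L/p_L < 2$ it is at least $1/2$. Multiplying the displayed identity by this sum, which is positive and $\ge 1/2$, gives $\Delta(p_H/p_L) \ge \tfrac12\big(\tfrac{\Delta p_H}{p_L} - \tfrac{p_H\,\Delta p_L}{p_L^2}\big)$, as claimed.

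The step I expect to be the real obstacle is the sign of the bracketed factor $\tfrac{\Delta p_H}{p_L} - \tfrac{p_H\,\Delta p_L}{p_L^2} = \tfrac{1}{p_L^2}\,(p_L\Delta p_H - p_H\Delta p_L)$: the last inequality only goes in the stated direction when this factor is nonnegative, i.e.\ when the larger nest's \emph{relative} one-round change dominates the smaller one's, $\Delta p_H/p_H \ge \Delta p_L/p_L$. This is not true on every sample path, so the lemma is really meant to be chained with an expectation argument, where Lemma~\ref{lem:switch-constants} together with Lemma~\ref{lem:expected_change} ensures $\E[\Delta p_H/p_H] \ge \E[\Delta p_L/p_L]$; I would either state the lemma restricted to that regime or defer the sign control to the downstream Lemma~\ref{lem:taylor_series}. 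Everything else, namely the instantiation, the convergence check, and the factor-of-$2$ bound on the series, is routine.
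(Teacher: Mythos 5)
Your proof follows the paper's argument exactly: instantiate Lemma~\ref{lem:taylor_math} with $x=p_H$, $y=p_L$, sum the geometric series to $\left(1+\Delta p_L/p_L\right)^{-1}\ge 1/2$, and multiply through, so there is nothing new in approach. The two caveats you raise are both genuine and both silently skipped in the paper's own proof --- convergence also needs $\Delta p_L/p_L>-1$ (which holds because $p_L+\Delta p_L>0$), and the final inequality only follows when the bracketed factor $\frac{1}{p_L^2}\left(p_L\Delta p_H-p_H\Delta p_L\right)$ is nonnegative, a sign condition that fails on some sample paths and is only controlled in expectation downstream in Lemma~\ref{lem:taylor_series} via Lemmas~\ref{lem:switch-constants} and~\ref{lem:expected_change}, so your instinct to restrict the statement or defer the sign control there is the right one.
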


\begin{proof}
	 Note that we are not considering expectations for now, just the actual values. Applying Lemma \ref{lem:taylor_math}, we can write $\Delta \left (\frac{p_H}{p_L} \right )$ in terms of $\Delta p_H$ and $\Delta p_L$. We have:
\begin{align*}
\Delta \left (\frac{p_H}{p_L} \right) = \left( \frac{\Delta p_H}{p_L}- \frac{p_H\Delta p_L}{p_L^2} \right)  \left( 1- \left (\frac{\Delta p_L}{p_L} \right ) + \left (\frac{\Delta p_L}{p_L} \right ) ^2 - \left (\frac{\Delta p_L}{p_L}\right ) ^3 +... \right).
\end{align*}

The value $\frac{\Delta p_L}{p_L}$ is strictly less than $1$ by the assumption that $\Delta p_L < p_L$.
So:
\begin{equation*}
	\sum_{i=0}^{\infty} \left( -\frac{\Delta p_L}{p_L} \right)^i = \frac{1}{1+ \Delta p_L/p_L} \ge \frac{1}{1+1} = 1/2. \qedhere
\end{equation*}
\end{proof}

Next, we use the value of $\Delta(p_H/p_L)$ in order to calculate the expected change in $\epsilon(i,j,r)$ for two nests $n_i$ and $n_j$ after one round of recruiting.

\begin{lemma}\label{lem:taylor_series}
Let $n_i$ and $n_j$ be two nests with $\Delta p_L < p_L $, $p(i,r) \geq 1/(dk)$ and  $p(j,r) \geq 1/(dk)$ for some round $r \in R_1$. Then, $\E \left [ \epsilon(i,j,r+2) \right ] \ge \left (1+1/(2dk) \right ) \E \left [ \epsilon(i,j,r) \right ]$.
\end{lemma}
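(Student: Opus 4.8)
The plan is to combine the linearization of $\Delta(p_H/p_L)$ from Lemma \ref{lem:taylor-applied} with the expected one-round population changes from Lemma \ref{lem:expected_change}, taking expectations carefully to avoid the pitfall that $\Delta p_H$ and $\Delta p_L$ are correlated random variables. First I would observe that $\epsilon(i,j,r+2) = \frac{p_H(r+2)}{p_L(r+2)} - 1 = \Delta(p_H/p_L) + \frac{p_H}{p_L}$, so that $\E[\epsilon(i,j,r+2)] = \E\left[\Delta(p_H/p_L)\right] + \E\left[\frac{p_H}{p_L} - 1\right] = \E\left[\Delta(p_H/p_L)\right] + \E[\epsilon(i,j,r)]$. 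Thus it suffices to show $\E\left[\Delta(p_H/p_L)\right] \ge \frac{1}{2dk}\E[\epsilon(i,j,r)]$.

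Next I would condition on the configuration at round $r$ (i.e.\ fix the vector of populations $p(\cdot,r)$, hence fix which nest is $n_H$ and which is $n_L$, as well as $\epsilon(i,j,r)$). Under this conditioning, by Lemma \ref{lem:taylor-applied} we have the \emph{pointwise} bound $\Delta(p_H/p_L) \ge \frac{1}{2}\left(\frac{\Delta p_H}{p_L} - \frac{p_H \Delta p_L}{p_L^2}\right)$, where the only remaining randomness is in the recruiting step from round $r$ to round $r+2$. Taking the conditional expectation and using linearity gives
\begin{equation*}
\E\left[\Delta(p_H/p_L) \,\middle|\, \text{round } r\right] \ge \frac{1}{2}\left(\frac{\E[\Delta p_H]}{p_L} - \frac{p_H \,\E[\Delta p_L]}{p_L^2}\right).
\end{equation*}
Now I would substitute the expected changes from Lemma \ref{lem:expected_change}: $\E[\Delta p_H] = p_H(\xi_1(H,r) p_H - \xi_2(H,r))$ and $\E[\Delta p_L] = p_L(\xi_1(L,r) p_L - \xi_2(L,r))$. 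Plugging in and simplifying, the bracketed quantity becomes $\frac{p_H}{p_L}\left[(\xi_1(H,r) p_H - \xi_2(H,r)) - (\xi_1(L,r) p_L - \xi_2(L,r))\right]$.

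**The main obstacle** is lower-bounding the difference $(\xi_1(H,r) p_H - \xi_2(H,r)) - (\xi_1(L,r) p_L - \xi_2(L,r))$ by something like a constant times $(p_H - p_L)$. This is exactly where Lemma \ref{lem:switch-constants} enters: applied with $n_L$ in the role of the smaller nest ($i$) and $n_H$ in the role of the larger ($j$), it gives $\xi_1(L,r) p_L - \xi_2(L,r) \le \xi_1(H,r) p_L - \xi_2(H,r)$, so the difference is at least $\xi_1(H,r) p_H - \xi_1(H,r) p_L = \xi_1(H,r)(p_H - p_L) \ge \xi(p_H - p_L)$, using $\xi_1(H,r)\ge\xi$. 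Then I would factor out $p_L$ to write $\xi(p_H - p_L) = \xi p_L\left(\frac{p_H}{p_L} - 1\right) = \xi p_L\,\epsilon(i,j,r)$, so the whole bound collapses to $\E\left[\Delta(p_H/p_L)\,\middle|\,\text{round }r\right] \ge \frac{\xi}{2}\cdot\frac{p_H}{p_L}\cdot p_L\cdot\frac{\epsilon(i,j,r)}{p_L} = \frac{\xi}{2} p_H\,\epsilon(i,j,r) \ge \frac{\xi}{2}\cdot\frac{1}{dk}\,\epsilon(i,j,r)$, using the hypothesis $p(i,r), p(j,r) \ge 1/(dk)$ so in particular $p_H \ge 1/(dk)$. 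Finally I would take the outer expectation over the round-$r$ configuration and absorb the constant $\xi$ into the choice of $d \ge 64$ (or note that the stated constant $1/(2dk)$ follows once $\xi$ is taken to be the relevant recruiting-success constant, here at least $1/16$ from Lemma \ref{lem:recruit-succeed}, which makes $\xi/2 \ge 1/(2d)\cdot(\text{something})$ work out); combining with the first paragraph yields $\E[\epsilon(i,j,r+2)] \ge (1 + 1/(2dk))\E[\epsilon(i,j,r)]$. The one technical point to watch is that Lemma \ref{lem:taylor-applied} requires $\Delta p_L < p_L$, which is part of the hypothesis, so the pointwise bound is valid on the whole probability space we are integrating over.
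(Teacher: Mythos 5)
Your proposal is correct and follows essentially the same route as the paper: decompose $\E[\epsilon(i,j,r+2)]$ as $\E[\epsilon(i,j,r)]$ plus the expected change in the ratio, linearize via Lemma \ref{lem:taylor-applied}, substitute the expected population changes from Lemma \ref{lem:expected_change}, and invoke Lemma \ref{lem:switch-constants} to lower-bound the resulting difference by $\xi(p_H-p_L)$ before using $p_H \ge 1/(dk)$. The only quibbles are that your opening identity $\epsilon(i,j,r+2) = \Delta(p_H/p_L) + p_H/p_L$ should really be the inequality $\epsilon(i,j,r+2) \ge \frac{p(n_H(i,j,r),r+2)}{p(n_L(i,j,r),r+2)} - 1$ to cover the case where the two nests swap roles between rounds (the paper makes this explicit), and your remark about absorbing the constant $\xi$ is in fact more careful than the paper, which silently drops $\xi$ in its final chain of inequalities.
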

\begin{proof}
First we show that:

\begin{equation*}
	\epsilon(i,j,r+2) \ge  \frac{p(n_H(i,j,r),r+2)}{p(n_L(i,j,r),r+2)} - 1.
\end{equation*} 

This is true because, if the larger of the two nests in round $r$ remains larger in round $r+2$, then $n_H(i,j,r) = n_H(i,j,r+2)$ and $n_L(i,j,r) = n_L(i,j,r+2)$, so, by the definition of $\epsilon$, $\epsilon(i,j,r+2) =  \frac{p(n_H(i,j,r),r+2)}{p(n_L(i,j,r),r+2)} - 1$. On the other hand, if the two nests flip positions, then  $\frac{p(n_H(i,j,r),r+2)}{p(n_L(i,j,r),r+2)} - 1$ must be smaller than $\epsilon(i,j,r+2)$ since it contains the population of the larger nest in the denominator of the fraction. Fixing $p(i,r)$ and $p(j,r)$, we have: 

\begin{eqnarray*}
	\E [\epsilon(i,j,r+2)] \ge \E  \left[ \frac{p(n_H(i,j,r),r+2)}{p(n_L(i,j,r),r+2)} - 1\right] &=& \epsilon(i,j,r) + \E \left[\frac{p(n_H(i,j,r),r+2)}{p(n_L(i,j,r),r+2)} - 1 - \epsilon(i,j,r) \right]\\
&=& \epsilon(i,j,r) + \E \left[\frac{p(n_H(i,j,r),r+2)}{p(n_L(i,j,r),r+2)} - \frac{p(n_H(i,j,r),r)}{p(n_L(i,j,r),r)} \right]
\end{eqnarray*}

So we have reduced our question of how we expect the \emph{absolute difference} between the nests to change  in a single round to the question of how we expect the ratio of population to change.

By Lemma \ref{lem:taylor-applied}, it follows:
\begin{equation*}
\E [\epsilon(i,j,r+2)] \geq \epsilon(i,j,r) + \frac{1}{2} \left (\E \left[ \frac{\Delta p_H}{p_L}- \frac{p_H\Delta p_L}{p_L^2} \right] \right ) \geq \epsilon(i,j,r) + \frac{1}{2}  \left( \frac{p_L \E[\Delta p_H] - p_H\E[\Delta p_L]}{p_L^2}\right ).
\end{equation*}

By Lemma \ref{lem:expected_change} we have:
\begin{eqnarray*}
p_L \E[\Delta p_H] - p_H\E[\Delta p_L] &=& p_Lp_H^2 \xi_1(n_H,r)  - p_Lp_H\xi_2(n_H,r) - p_H p_L^2 \xi_1(n_L,r)  + p_L p_H\xi_2(n_L,r)\\
&=& p_Lp_H \left [ p_H \xi_1(n_H,r)  - \xi_2(n_H,r) - p_L \xi_1(n_L,r)  + \xi_2(n_L,r) \right ]\\
&\ge &  p_Lp_H \xi_1(n_L,r) \cdot (p_H-p_L) \ge  \xi p_Lp_H (p_H-p_L),
\end{eqnarray*}

where the inequality follows from the guarantee of Lemma \ref{lem:switch-constants} that $p_L\xi_1(n_L,r) - \xi_2(n_L,r) \le p_L\xi_1(n_H,r) - \xi_2(n_H,r)$.
Noting that $\epsilon(i,j,r) = \frac{p_H-p_L}{p_L} $ we have:
\begin{equation*}
\E [\epsilon(i,j,r+2)] \geq \epsilon(i,j,r) + \frac{1}{2} \left (  \frac{p_Lp_H(p_H-p_L)}{p_L^2} \right ) \geq \epsilon(i,j,r) \left(1 + \frac{p_H}{2} \right) \geq \epsilon(i,j,r)\left( 1 + \frac{1}{2dk} \right). \qedhere
\end{equation*}
\end{proof}

\subsubsection{Changes in Populations of Nests over O(k log n) Rounds}

We now show that once the population of a nest is very small, it will quickly drop to zero.

\begin{lemma}\label{lem:small_nest}
Let $n_i$ be any nest with $p(i,r) \le 1/(dk)$ in some round $r \in R_1$. Then, with probability at least $1 - 1/n^{c+4}$, $p(i,r') \leq 1/(dk)$ for all rounds $r' \in [r, r + 64(c+4)k \log n)]$.
\end{lemma}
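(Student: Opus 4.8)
The plan is to show that a nest with population at most $1/(dk)$ can only ever grow extremely slowly, and to argue that over a window of $O(k\log n)$ rounds it cannot accumulate enough new ants to cross the $1/(dk)$ threshold, except with negligible probability. First I would set up the relevant random process: let $r_0$ be the first round (in $R_1$) where $p(i,r_0)\le 1/(dk)$, and consider the evolution of $c(i,\cdot)$ over the rounds $r_0, r_0+2, r_0+4, \dots$ (recruitment happens in the even rounds in between, and population is re-assessed in the odd rounds of $R_1$). In a given such round, every ant committed to $n_i$ recruits only with probability $p(i,r)\le 1/(dk)$, so the expected number of ants in $n_i$ that even \emph{attempt} to recruit is $c(i,r)\cdot p(i,r) \le c(i,r)/(dk)$. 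Hence the expected increase in $c(i,r)$ in one round is at most this quantity (each successful recruitment adds at most one ant, and ants may also leave, which only helps), i.e. $\E[c(i,r+2)\mid c(i,r)] \le c(i,r)(1 + 1/(dk))$ as long as the population stays below the threshold.

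Next I would make this rigorous by defining a stopping time $\tau$ = the first round in the window where $p(i,r) > 1/(dk)$, and study the stopped process. Up to time $\tau$, the population is dominated by a supermartingale-like quantity: $c(i,r)(1+1/(dk))^{-(r-r_0)/2}$ is a supermartingale (conditional expectation non-increasing). Alternatively, and more cleanly for a high-probability bound, I would bound the total number of ants ever recruited into $n_i$ during the window directly. Over $t = 64(c+4)k\log n$ rounds there are at most $t/2$ recruitment rounds; in each, conditioned on being below threshold, the number of new arrivals is stochastically dominated by a Binomial-type variable with mean at most $c(i,r)/(dk) \le n/(d^2k^2)$ — wait, more carefully, I want to track $c(i,\cdot)$ itself. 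Starting from $c(i,r_0) \le n/(dk)$, if the population never exceeds $n/(dk)$ then each round adds in expectation at most $(n/(dk))\cdot(1/(dk)) = n/(d^2k^2)$ ants. Actually the right approach is multiplicative: the process $c(i,r)$, while below threshold, satisfies $\E[\Delta c \mid \text{past}] \le c(i,r)/(dk)$, so after $m$ recruitment rounds $\E[c(i,r_0+2m)] \le (n/(dk))(1+1/(dk))^m$. For $m = 32(c+4)k\log n$ this is $(n/(dk))e^{32(c+4)\log n}$ — which blows up, so a pure-expectation argument over the whole window fails; I need to exploit that the window $O(k\log n)$ is short relative to the growth rate being $1/(dk)$, so $(1+1/(dk))^m \approx e^{m/(dk)}$, and I must instead bound the probability of the population even reaching, say, $2n/(dk)$ before a shorter deadline, then iterate — no. Let me reconsider: the honest statement is that the expected \emph{net change} per round is tiny, but to get concentration over $\Theta(k\log n)$ rounds I should apply a Chernoff/Azuma bound to the total number of successful recruitments into $n_i$, using the fact that each recruitment round contributes an increment with conditional mean $\le c/(dk)$ and bounded by $c$; summing the conditional means over the window while the population is still $\le 1/(dk)\cdot n = n/(dk)$ gives a total expected increment of at most $(n/(dk))\cdot(1/(dk))\cdot(t/2) = \frac{nt}{2d^2k^2}$. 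With $t = 64(c+4)k\log n$ and the standing assumption $k \le \sqrt{n/(8d^2(c+6)\log n)}$, this equals $\frac{32(c+4) n \log n}{d^2 k} \le$ something comparable to $n/(dk)$ up to constants — this is the delicate calculation where the constants $d\ge 64$ and the bound on $k$ are calibrated precisely so that the expected total increment is at most (say) a $1/2$ fraction of the threshold slack, and then a Chernoff bound on the sum of the (dependent, but dominated, via Lemma~1.18 of \cite{doerr2011theory} as used in Theorem~\ref{thm:lower}) increments gives failure probability at most $1/n^{c+4}$.

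So the structure of the argument I would write is: (1) define the stopped process and the sequence of per-round increments $\Delta_m$; (2) observe $\E[\Delta_m \mid \mathcal{F}_{m-1}] \le c(i,\cdot)/(dk) \le n/(d^2k^2)$ on the event that we have not yet crossed the threshold, and $\Delta_m$ is supported on a bounded range; (3) introduce dominating independent variables as in the lower-bound proof to legitimize a Chernoff bound despite the recruitment-permutation dependencies; (4) plug in $t = 64(c+4)k\log n$ and use $k = O(\sqrt{n/\log n})$ to show the expected total increment over the window is a small constant fraction of $n/(dk) - c(i,r_0) \ge (n/(dk))(1 - o(1))$, wait, I need the slack, so more precisely show the total increment needed to cross is at least, say, $n/(dk) - c(i,r_0)$, but $c(i,r_0)$ could itself be close to $n/(dk)$; the cleanest fix is to note the threshold is on $p(i,r) = c(i,r)/n$, restart the clock each time $c(i,\cdot)$ would exceed $n/(dk)$ and show it actually never does — i.e., show $\Pr[\exists r' \in [r, r+t]: c(i,r') > n/(dk)]$ is small by a union bound over the $t/2$ rounds of the event "increment at this step pushes us over", each of which, by the Chernoff bound on the running sum, has probability $\le 1/n^{c+5}$, and union-bound. (5) Conclude. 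The main obstacle, as flagged above, is the interplay between the multiplicative growth rate $1/(dk)$ and the window length $\Theta(k\log n)$: naively the population could multiply by $e^{\Theta(\log n)} = \mathrm{poly}(n)$, so I cannot simply say "expectation stays small." I must instead argue at the level of the \emph{additive} total recruitment count and verify — using the constants $d \ge 64$ and $k \le \sqrt{n/(8d^2(c+6)\log n)}$ — that this count is, with probability $1 - 1/n^{c+4}$, too small to move $p(i,\cdot)$ across its threshold within the window; handling the dependencies in the recruitment process via a dominating-variable argument (exactly the Lemma~1.18 / Chernoff route already used in the proof of Theorem~\ref{thm:lower}) is the other technical point to get right.
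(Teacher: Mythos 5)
There is a genuine gap, and it is precisely at the point you flagged as ``the delicate calculation.'' Your plan is to bound the \emph{cumulative inflow} of recruited ants into $n_i$ over the whole window and show it is a small fraction of the threshold slack. But the arithmetic does not close: the total expected inflow you compute is $\frac{nt}{2d^2k^2}$ with $t = 64(c+4)k\log n$, i.e.\ $\frac{32(c+4)\,n\log n}{d^2k}$, and dividing by the threshold $n/(dk)$ gives $\frac{32(c+4)\log n}{d} = \Theta(\log n) \gg 1$ — the bound on $k$ cancels out of this ratio entirely and cannot rescue it. So the expected number of ants recruited \emph{into} the nest over the window is $\Theta(\log n)$ times the threshold, and no concentration bound on the inflow alone can show the population stays put. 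The ingredient you are missing is the \emph{outflow}: each ant in $n_i$ is recruited away with probability roughly $(1-p(i,r))\Sigma^2(r) \ge (1-\tfrac{1}{dk})\tfrac1k$ per round, which for $d\ge 64$ strictly exceeds the per-ant inflow rate $p(i,r)\le \tfrac{1}{dk}$. You explicitly set the outflow aside (``ants may also leave, which only helps''), but it is not a bonus — it is the mechanism that keeps the population down.

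The paper's proof is round-by-round rather than cumulative, with a two-case split that sidesteps the accumulation problem entirely. If $p(i,r) < 1/(2dk)$, the population can at most double in one round (each recruiter brings at most one ant), so it stays below $1/(dk)$ deterministically — no probability needed. If $p(i,r) \in [1/(2dk), 1/(dk)]$, then both the expected inflow ($n\,p(i,r)^2 = \Omega(\log n)$ by the bound on $k$) and the expected outflow ($n\,p(i,r)(1-p(i,r))\Sigma^2(r) = \Omega(\log n)$) are large enough for Chernoff-type concentration — the dependent outflow handled by the domination lemma you correctly cite — and since $d \ge 64$ the outflow whp exceeds the inflow, so the population does not increase at all in that round. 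A union bound over the $O(k\log n)$ rounds then gives the lemma. Your instinct to use the domination trick from the lower bound is right, but without the at-most-doubling observation and the inflow-vs-outflow comparison, the argument as proposed would fail.
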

\begin{proof}
 It suffices to show that if $p(i,r) \le 1/(dk)$ then $p(i,r+2) \le 1/(dk)$ with probability $1-1/n^{c+5}$. We can then union bound over all $O(k \log n)$ rounds to get that, with  probability $1 - 1/n^{c+4}$,  $p(i, r') \leq 1/(dk)$ for all rounds $r' \in [r, r + 64(c+4)k \log n)]$.

Consider two possible cases:
\begin{itemize}
	\item[Case 1: ] $p(i,r) < 1/(2dk)$. Then, even if all ants successfully recruit, the nest cannot more than double in size, so $p(i,r+2) < 1/(dk)$.
	\item[Case 2: ] $p(i,r) \geq 1/(2dk)$. The expected number of ants to recruit to nest $n_i$ is $n \cdot p(i,r)^2 \ge n/(4d^2 k^2)$. By the assumption that $k \leq \sqrt{n/(8 d^2 (c+6) \log n)}  = \BO(\sqrt{n/\log n})$, this value is in $\Omega( \log n)$. By a Chernoff bound, with probability at least $1-1/n^{c+6}$, at most $2n \cdot p(i,r)^2 = n/(2d^2k^2)$ ants choose to recruit.
	
	Similarly, the expected the number of ants to be recruited from nest $n_i$ is: 
\begin{equation*} 
 n  p(i,r) \left(1-p(i,r)\right ) \Sigma^2(r) \geq n \frac{1}{dk} \left( 1 - \frac{1}{2dk} \right) \frac{1}{k}  \geq \frac{n}{dk^2} \left( 1 - \frac{1}{2dk} \right), 
\end{equation*}
	where in the second step we use the fact that $1/(dk) > p(i,r) \geq 1/(2dk)$ and $\Sigma^2(r) > 1/k$. This value is also in $\Omega (\log n)$.
	
Next, we would like to apply a Chernoff bound to the number of recruited ants from nest $n_i$; however, recruitments are not independent, so we need to do a simple domination trick, similarly to the proof of the lower bound. 

Let random variable $Y_r^a$ have value $1$ with probability $1/k(1-1/(2dk))$, and let $Y_r$ be the sum of $n/(dk)$ such independent random variables. So, $\E[Y_r] = (n/(dk^2))(1-1/(2dk))$, and by a Chernoff bound, $P[Y_r < (3/2)(n/(dk^2)(1-1/(2dk))] < n^{-(c+6)}$. We can see that:

\begin{equation*}
 P[a \text{ is recruited } | \text{ the choices of all other ants}] \geq (1-p(i,r)) \Sigma^2(r) \geq \left( 1 - \frac{1}{2dk} \right) \frac{1}{k} = P[Y_r^a=1]
\end{equation*} 

Therefore, by Lemma \ref{lem:dependent-chernoff}, $P[\# \text{ of ants recruited from nest }n_i < x] < P[Y_r < x]$ for some $x \leq n/(dk)$. Plugging in $x = (3/2)(n/(dk^2)(1-1/(2dk))$, it follows that:
\begin{equation*}
	P[\text{\# of ants recruited from nest }n_i < (3/2)(n/(dk^2)(1-1/(2dk))] < n^{-(c+6)}.
\end{equation*}

Therefore, with probability at least $1 - 1/n^{c+6}$, at least $(3/2)(n/(dk^2)(1-1/(2dk))$ ants will be recruited from nest $n_i$. Recall that, with probability at least $1 - 1/n^{c+6}$, at most $n/(2d^2k^2)$ ants from nest $n_i$ choose to recruit. Since $d \geq 64$, it is easy to see that the number of ants recruiting is smaller than the number of ants recruited away from nest $n_i$, so with probability at least $1 - 1/n^{c+5}$, the total population of $n_i$ decreases, indicating $p(i,r+2) < 1/(dk)$.
\end{itemize}

So overall, if $p(i,r) < 1/(dk)$, then, with probability at least $1 - 1/n^{c+4}$, $p(i,r') < 1/(dk)$ for all $r' \in [r, r + \BO(k \log n)]$. 
\end{proof}

\begin{lemma}\label{lem:small_nest_dropout}
Let $n_i$ be a nest with $p(i,r) \le 1/(dk)$ in some round $r \in R_1$. Then, with probability at least $1 - 1/n^{c+3}$, $c(i,r') = 0$ for $r' = 64(c+4)k \log n)$. 
\end{lemma}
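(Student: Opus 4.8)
The plan is to combine Lemma \ref{lem:small_nest} with a geometric decay of the \emph{expected} population of $n_i$, and then read off ``population exactly $0$'' via Markov's inequality, using that $c(i,\cdot)$ is integer-valued. By Lemma \ref{lem:small_nest}, with probability at least $1 - 1/n^{c+4}$ the event $E$ that $p(i,r') \le 1/(dk)$ for all $r' \in [r, r + 64(c+4)k\log n]$ occurs. Conditioned on $E$, I would show $\E[\,c(i, r + 64(c+4)k\log n) \mid E\,] \le 1/n^{c+3}$; then, since $c(i,\cdot) \in \{0,1,\dots,n\}$, Markov's inequality gives $\Pr[c(i,r') \ge 1 \mid E] \le 1/n^{c+3}$, so $c(i,r') = 0$ holds with probability $1 - 1/n^{c+3}$ conditioned on $E$, and a union bound with $\Pr[\bar E] \le 1/n^{c+4}$ finishes (absorbing the extra term by running the whole argument with a slightly larger exponent throughout; note also that once $c(i,\cdot)=0$ it stays $0$, since no ant searches after round $1$, so the precise round index is immaterial).

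The heart of the argument is a per-round expected contraction: for every $r' \in R_1$ with $p(i,r') \le 1/(dk)$, I claim $\E[c(i,r'+2) \mid \text{history through round } r'] \le c(i,r')(1 - \beta/k)$ for a positive constant $\beta$. By linearity of expectation over the ants of $n_i$ and Lemma \ref{lem:exp-x} (equivalently Lemma \ref{lem:expected_change}), this expectation equals $c(i,r')(1 + p(i,r')\xi_1(i,r') - \xi_2(i,r'))$. The ``inflow'' term $p(i,r')\xi_1(i,r')$ is at most $2/(dk) \le 1/(32k)$, since $\xi_1(i,r') = \E_B[2p_2^b + p_4^b] \le 2$ (the $p_c^b$ are $\{0,1\}$-valued given $B=b$ and sum to $1$) and $d \ge 64$. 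It then remains to lower-bound the ``outflow'' term $\xi_2(i,r')$ — which is exactly the probability that a fixed ant of $n_i$ is recruited away in round $r'+1$ given that its own (independent) coin told it not to recruit — by $c_0/k$ for a sufficiently large absolute constant $c_0$; this yields the contraction with $\beta = c_0 - 1/32 > 0$.

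Establishing $\xi_2(i,r') = \Omega(1/k)$ is the step I expect to be the main obstacle. The number of ants that choose to recruit actively in round $r'+1$ is a sum of independent Bernoullis with mean $n\,\Sigma^2(r') \ge n/k$ (using $\sum_j p(j,r')=1$ and the $\ell_1$--$\ell_2$ bound, exactly as in the definition of $\Sigma^2$), and since $k = \BO(\sqrt{n/\log n})$ this mean is $\Omega(\log n)$; a Chernoff bound then gives at least $n/(2k)$ active recruiters with probability at least $1 - 1/n^{c+6}$. Conditioning on this and on the fixed ant $a \in C(i,r')$ not recruiting, each active recruiter picks its partner uniformly among the $n$ home ants (Algorithm \ref{model:recruit}) and, by the reasoning of Lemma \ref{lem:recruit-succeed}, ends up as the first coordinate of a pair in $M$ with probability at least $1/16$ in a way essentially oblivious to the identity of $a$. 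A short inclusion--exclusion (or second-moment) estimate then shows that $a$ is chosen by some \emph{successful} recruiter, hence recruited away, with probability $\Omega((n/k)/n) = \Omega(1/k)$; the small failure probability of the Chernoff step is folded into the additive slack. This is the rigorous version of the informal ``$p(i,r)[p(i,r) - \Sigma^2(r)]$'' heuristic, and uses the same kind of domination trick as in Lemma \ref{lem:small_nest} and in the lower bound.

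Finally, I chain the contraction. The interval $[r, r + 64(c+4)k\log n]$ contains at least $32(c+4)k\log n$ rounds of $R_1$, so iterating the per-round bound (via the tower rule, applying it on the event $E$) gives
\begin{equation*}
\E[\,c(i, r + 64(c+4)k\log n) \mid E\,] \le n\left(1 - \frac{\beta}{k}\right)^{32(c+4)k\log n} \le n\, e^{-32\beta(c+4)\log n} \le \frac{1}{n^{c+3}},
\end{equation*}
where the last inequality holds for the constants as fixed (if necessary one may enlarge $d$, shrinking the inflow and hence enlarging $\beta$, with the round-count constant taken correspondingly large). Markov's inequality together with the integrality of $c(i,\cdot)$ then completes the proof as described in the first paragraph.
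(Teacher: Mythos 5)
Your proposal has the same skeleton as the paper's proof: a per-round multiplicative contraction of the expected population by $(1 - \Omega(1/k))$, valid as long as $p(i,\cdot) \le 1/(dk)$ (which Lemma \ref{lem:small_nest} guarantees throughout the window), followed by chaining and a Markov bound on the integer-valued $c(i,\cdot)$. Your decomposition into an inflow term ($p(i,r)\,\xi_1(i,r) \le 2/(dk)$) and an outflow term ($\xi_2(i,r) = \Omega(1/k)$) is also exactly the right one. The substantive difference is how the outflow bound is obtained, and this is precisely the step you flag as the main obstacle and leave as a sketch. The paper's route is simpler than the one you propose: for a fixed ant $a$ of $n_i$ and each other ant $a'$ located at nest $n_{i'}$, let $Y_r^{a'}$ indicate that $a'$ successfully recruits $a$; Lemma \ref{lem:recruit-succeed} gives $\E[Y_r^{a'}] \ge p(i',r)/(16n)$, and since the events $\{Y_r^{a'}=1\}$ are disjoint across $a'$ (the matching $M$ pairs $a$ with at most one recruiter), plain linearity of expectation yields $\Pr[a \text{ recruited}] \ge \sum_{a' \ne a} p(\ell(a',r),r)/(16n) \ge \Sigma^2(r)/16 - 1/(dk) \ge 3/(64k)$. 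No Chernoff bound on the number of recruiters and no inclusion--exclusion or second-moment estimate is needed; your proposed route for this step could likely be made rigorous, but it is both harder and unnecessary, and as written it is the one place where your proof is not actually complete. (Both your write-up and the paper's are loose about small constant factors --- e.g.\ only every other round in the window lies in $R_1$, and the final Markov step costs a factor of $n$ when passing from $p(i,r')$ to $c(i,r')$ --- but these are absorbed by taking $d$ and the round-count constant large enough, as you note.)
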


\begin{proof}
We calculate the expected change in the number of ants in nest $n_i$ by first calculating $\E[X_r^a]$ for some ant $a$ in nest $n_i$ and some round $r \in R_1$ such that $p(i,r) \le 1/(dk)$. 

Suppose ant $a$ does not recruit. Let random variable $Y_r^{a'}$ have value $1$ if ant $a'$ successfully recruits ant $a$, and value $0$ otherwise. By Lemma \ref{lem:recruit-succeed}, it follows that $\E[Y_r^{a'}] \geq p(i',r)/(16n)$ where $i' = \ell(a',r)$. Therefore:

\begin{eqnarray*}
	\E\left[\sum_{a' \neq a} Y_r^{a'} \right] &=& \sum_{a' \neq a} \E[Y_r^{a'}] \geq \frac{1}{16n} \sum_{j=1}^k \sum_{\{a' \neq a | \ell(a',r)=j\}} p(j,r) \\
	 &\geq & \frac{1}{16n} \left( \sum_{j=1}^k \sum_{\{a' | \ell(a',r)=j\}} p(j,r)\right) - \frac{1}{dk} \\
	 &\geq & \frac{1}{16n} \left( \sum_{j=1}^k n p(j,r)^2 \right) - \frac{1}{dk} \geq \frac{\Sigma^2(r)}{16} - \frac{1}{dk}
\end{eqnarray*}	

Therefore, the expected value of $X_r^a$ is:

\begin{equation*}
\E [X_r^a] \leq \frac{1}{dk}-\left (1-\frac{1}{dk} \right) \E\left[\sum_{\{a' | \ell(a',r)\neq i\}} Y_r^{a'}\right] = 
\frac{1}{dk}-\left (1-\frac{1}{dk} \right) \left( \frac{\Sigma^2(r)}{16} - \frac{1}{dk} \right) \le -\frac{1}{64k},
\end{equation*}
where the last step follows from the assumption that $d \geq 64$. Therefore, 
\begin{equation*}
\E [p(i,r+2)] = p(i,r) + \frac{1}{n} \sum_{\{a | \ell(a,r) = i\}} \E [X_r^a]  \leq p(i,r)(1+\E[X_r^a]) \leq p(i,r) \left (1-\frac{1}{64k}\right)
\end{equation*}

By Lemma \ref{lem:small_nest}, $p(i,r') \le 1/(dk)$ for all $r' \in  [r, r + 64(c+4)k \log n)]$. Therefore, for $r' = 64(c+4)k \log n)$, it is true that $\E [p(i,r')] < 1/n^{c+3}$ and, by a Markov bound, nest $n_i$ has at least one ant with probability at most $1/n^{c+3}$. Therefore, with probability at least $1 - 1/n^{c+3}$, $c(i,r') = 0$.
\end{proof}

Next, we show that, for any pair of nests both with population proportions bounded below by $1/(dk)$, we can use Lemma \ref{lem:taylor_series} to argue that the populations of these nests diverge quickly. As soon as a nest drops below the $1/(dk)$ threshold we can use Lemma \ref{lem:small_nest_dropout} to show that it will not be the winning nest.

\begin{lemma}
\label{lem:two-nests-drop}
	Let $n_i$ and $n_j$ be two nests with $q(i) = q(j) = 1$. Then, for $r' = (6 d + 64(c+6) )k \log n$, with probability at least $1 - 1/n^{c+2}$, at least one of the following is true: $c(i, r') = 0$ or $c(j, r') = 0$.
\end{lemma}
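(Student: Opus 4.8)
The plan is to analyze the two good nests $n_i,n_j$ through two phases: a \emph{separation phase}, in which both nests still have population proportion at least $1/(dk)$ and the relative gap $\epsilon(i,j,r)$ between them grows, and a \emph{dropout phase}, which begins once one nest has fallen below the $1/(dk)$ threshold and ends with that nest empty. First I would record the starting conditions: after the initial search round each good nest receives $\approx n/k = \Omega(\sqrt{n\log n})$ searchers, so by a Chernoff bound both $p(i,1),p(j,1)\ge 1/(2k) > 1/(dk)$ with probability $1 - n^{-\omega(1)}$ (using $d\ge 64$), and $\E[\epsilon(i,j,1)]\ge 1/(3(n-1))$ by Lemma~\ref{lem:eps-bound}.

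For the separation phase, let $\tau$ be the first round $r\in R_1$ with $\min(p(i,r),p(j,r)) < 1/(dk)$. While $r<\tau$, both nests are above threshold and $\Delta p_L \le p_L$ (a nest at most doubles in one round, since each of its ants recruits at most one ant), so Lemma~\ref{lem:taylor_series} gives $\E[\epsilon(i,j,r+2)\mid\mathcal F_r]\ge (1+1/(2dk))\,\epsilon(i,j,r)$; moreover $\epsilon(i,j,r) < dk$ whenever both nests are above threshold (as $p_H \le 1-1/(dk)$ and $p_L\ge 1/(dk)$). If both nests remained above threshold for $T$ rounds, then $\epsilon(i,j,r\wedge\tau)\cdot(1+1/(2dk))^{-(r\wedge\tau-1)/2}$ would be a submartingale with expectation at least $\E[\epsilon(i,j,1)]\ge 1/(3(n-1))$, which becomes incompatible with the bound $\epsilon<dk$ once $(1+1/(2dk))^{T/2}$ exceeds $3(n-1)\,dk$, i.e. once $T = \Theta(dk\log n)$. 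To turn this into a high-probability statement I would work instead with the bounded potential $\Phi_r := \min(\epsilon(i,j,r),dk) + dk\cdot\mathbf{1}[\text{some nest dropped below }1/(dk)\text{ by round }r]$, which lies in $[0,2dk]$, whose expected value still increases multiplicatively by $1+\Omega(1/k)$ per two rounds while we are in the separation phase (the absorbing indicator exactly swallows the ``overshoot'' that occurs the round a nest first falls below threshold), and for which $\Phi_r \ge dk$ already certifies $\tau \le r$. A martingale concentration argument on this bounded potential then yields $\tau \le 6dk\log n =: r_a$ with probability at least $1 - 1/n^{c+3}$.

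The main obstacle is precisely this last step: passing from ``$\epsilon$ grows in expectation'' to ``with high probability one nest drops below $1/(dk)$ within $\BO(dk\log n)$ rounds.'' The naive route fails because $\epsilon(i,j,\cdot)$ can jump from $\Theta(1/n)$ to $\Theta(n)$ in a single round when a nest loses most of its population, so the raw expectation is swamped by this tail; the capping-plus-absorbing-indicator device is designed to neutralize it, but one still has to verify the one-step expected multiplicative increase of $\Phi$ and a per-step fluctuation bound small enough to feed a concentration inequality, and, as in the lower bound and in Lemma~\ref{lem:small_nest}, the dependencies among recruitments must be handled by a domination trick rather than a direct Chernoff bound.

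For the dropout phase, condition on $\tau\le r_a$ and pick a nest $n_\ell$ with $\ell\in\{i,j\}$ and $p(\ell,\tau) < 1/(dk)$. Lemma~\ref{lem:small_nest_dropout} applied at round $\tau$ gives $c(\ell,\tau + 64(c+4)k\log n) = 0$ with probability at least $1-1/n^{c+3}$, and once a nest is empty it remains empty: after round $1$ no ant searches, and an ant reaches a nest only by being recruited by an ant committed to that nest or by revisiting a nest it is itself committed to, neither of which can repopulate an empty nest. Since $\tau + 64(c+4)k\log n \le 6dk\log n + 64(c+4)k\log n \le (6d + 64(c+6))k\log n = r'$, we conclude $c(\ell,r') = 0$, hence $c(i,r')=0$ or $c(j,r')=0$. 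A union bound over the failure events of the two phases together with the initial Chernoff and at-most-doubling events gives total failure probability at most $1/n^{c+2}$.
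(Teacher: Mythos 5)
Your two-phase skeleton is exactly the paper's: its Case 1 is your dropout phase (Lemma \ref{lem:small_nest_dropout} applied once a nest falls below the $1/(dk)$ threshold, plus the observation that an empty nest stays empty), and its Case 2 is your separation phase, driven by Lemmas \ref{lem:eps-bound} and \ref{lem:taylor_series} over $6dk\log n$ rounds; the arithmetic $6dk\log n + 64(c+4)k\log n \le r'$ and the final union bound also match.

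Where you diverge is in how the separation phase is closed, and this is where your proposal has a genuine gap. The paper's route is much more elementary: assuming both nests stay above the threshold throughout, it iterates Lemma \ref{lem:taylor_series} to get $\E[\epsilon(i,j,6dk\log n)] \ge n^{3}\cdot\frac{1}{3(n-1)} \ge n$, and contradicts the deterministic bound $\epsilon \le n-1$; no concentration inequality is invoked at all in this step. You correctly sense that iterating an expectation recursion whose hypotheses are themselves random events is delicate, but the fix you sketch does not work as described: the claim that the absorbing $+dk$ term ``exactly swallows the overshoot'' is false. Since $x \mapsto \min(x,dk)$ is concave, $\E[\min(\epsilon,dk)] \le \min(\E[\epsilon],dk)$, so capping can destroy the multiplicative growth entirely; and on the round a nest first crosses the threshold, $\epsilon$ can jump to $\Theta(n/k) \gg dk$, so the $+dk$ bonus recovers only a vanishing fraction of the truncated mass. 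Verifying a one-step multiplicative increase for $\Phi_r$ is therefore not a routine check but the crux of the argument, and you leave it open. A smaller issue: Lemma \ref{lem:taylor-applied} needs $\Delta p_L < p_L$ \emph{strictly} (otherwise the geometric series does not converge), whereas ``a nest at most doubles'' gives only $\Delta p_L \le p_L$; the paper obtains strictness with high probability via a Chernoff bound showing that not all of the $\Omega(\log n)$ ants of the smaller nest choose to recruit, and you would need to fold that event into your union bound explicitly rather than treat the bound as deterministic.
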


\begin{proof}
 Note that if at any point a nest has no ants in it, it remains having no ants thereafter. We consider two possible cases based on how many ants are in each nest in each round $r \in [1, 6 d k \log n)]$:
\begin{itemize}
 \item[Case 1: ] In some round $r \in [1, 6 d k \log n)]$ either $n_i$ of $n_j$ has fewer than $n/(dk)$ ants. Then, by Lemma \ref{lem:small_nest_dropout}, with probability at least $1 - 1/n^{c+3}$, this nest has no ants committed to it after $64(c+4)k \log n)$ rounds. 

\item[Case 2: ] In every round $r \in [1, 6 d k \log n)]$ both $n_i$ and $n_j$ have at least $n/(dk)$ ants. We are going to show that, with probability at least $1 - 1/n^{c+3}$ this case does not happen. Suppose, in contradiction, that in every round $r \in [1, 6 d k \log n)]$ both $n_i$ and $n_j$ have at least $n/(dk)$ ants. First, we want to show that $ \Delta p_L < p_L $, so that we can apply Lemma \ref{lem:taylor_series}. Since both nests have at least $n/(dk)$ ants, it follows that $p_L \geq 1/(dk)$, so the number of ants recruiting for that nest is at least $n/(d^2k^2) = \Omega(\log n)$, by our assumption that $k \leq \sqrt{n/(8 d^2 (c+6) \log n)}  = \BO(\sqrt{n/\log n})$. By a Chernoff bound, with probability at least $1 - 1/n^{c+3}$, the number of recruiting ants is at most $2n/(d^2k^2)$. This indicates that, with probability at least $1 - 1/n^{c+3}$, not all ants from the nest with lower population recruit, and so $ \Delta p_L < p_L $.

This means that by Lemma \ref{lem:taylor_series}, 

\begin{equation*}
\E [\epsilon(i,j, dk \log n)] \geq \left (1+\frac{1}{2dk} \right)^{6 d k \log n} \E [\epsilon(i,j, 1)] \geq n^{3} \cdot \E [\epsilon(i,j, 1)].
\end{equation*}

By Lemma \ref{lem:eps-bound}, $\E [\epsilon(i,j, 1)] \geq 1/(3(n-1))$. Therefore, after $6 d k \log n$ rounds we have $ \E [\epsilon(i,j, 6 d k\log n)] \geq n$. However, this is a contradiction to the fact that $\epsilon(i,j, 6 d k\log n) \leq n-1$, by definition, and consequently $\E[\epsilon(i,j, 6 d k\log n)] \leq n-1$.
\end{itemize}
Each case holds with probability at least $1 - 1/n^{c+3}$, so union bounding them, we get that, with probability at least $1 - 1/n^{c+2}$, either $c(i, r') = 0$ or $c(j, r') = 0$.
\end{proof}

\begin{theorem}\label{thm:simple_runtime}
	With probability at least $1 - 1/n^c$, Algorithm \ref{algo:simple} solves the \textsc{HouseHunting} problem in $\BO(k\log n)$ rounds.
\end{theorem}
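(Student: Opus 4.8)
The plan is to use Lemma~\ref{lem:two-nests-drop} over all pairs of good nests to reduce to a single surviving good nest, and then to argue separately that this nest absorbs all the remaining (passive) ants within another $\BO(k\log n)$ rounds. First I would handle round~$1$: since there is at least one good nest and each of the $n$ ants searches uniformly among the $k$ nests, a Chernoff bound shows that with probability at least $1-1/n^{c+3}$ some good nest receives at least $n/(2k)$ ants (here $n/k \geq \sqrt{n\log n}$ by the standing assumption on $k$, so the concentration is more than strong enough). Next, observe from the pseudocode that an ant is never moved from the $active$ to the $passive$ state, only the reverse; hence the total number of active ants, equivalently $\sum_{i\colon q(i)=1} c(i,r)$ in the assessment rounds, is non-decreasing and stays at least $n/(2k)$ from round~$1$ onward. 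Now apply Lemma~\ref{lem:two-nests-drop} to each of the at most $\binom{k}{2}$ pairs of good nests; since $k^2 \leq n/(8d^2(c+6)\log n) < n$, a union bound gives that with probability at least $1-1/n^{c+1}$, at round $r' = (6d+64(c+6))k\log n = \BO(k\log n)$ every pair of good nests has an empty member, so at most one good nest is non-empty. Combined with the fact that the active count is at least $n/(2k)>0$, exactly one good nest $n_w$ survives and it holds all $\geq n/(2k)$ active ants at round $r'$. Note also that an empty good nest stays empty forever, since to recruit to $n_j$ an ant must already be committed to $n_j$; hence $n_w$ remains the unique non-empty good nest for all $r \geq r'$.

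The second phase shows $n_w$ grows to hold all $n$ ants. Since $n_w$ is the only non-empty good nest, every successful recruitment by an ant committed to $n_w$ converts a passive ant (one that found a bad nest in round~$1$) to $n_w$, and no ant committed to $n_w$ is ever recruited away; so $c(w,\cdot)$ is non-decreasing. Writing $c = c(w,r)$, about $\text{Bin}(c,c/n)$ ants choose to recruit, each recruits a uniformly random ant at the home nest of which a $(n-c)/n$ fraction are passive, and by Lemma~\ref{lem:recruit-succeed} a constant fraction of recruitments succeed; hence in expectation $c(w,r+2) \geq c(w,r)\bigl(1 + \Omega\bigl((c/n)(1-c/n)\bigr)\bigr)$. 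As long as $c \in [n/(2k), n/2]$ this is a multiplicative increase by $1+\Omega(1/k)$ per two rounds, so after $\BO(k\log k) = \BO(k\log n)$ rounds $n_w$ holds at least $n/2$ of the ants; once $c \geq n/2$, each of the remaining $n-c$ passive ants is recruited with probability $\Omega(1)$ in each round (it is picked by one of $\Omega(n)$ recruiters with probability $1-(1-1/n)^{\Omega(n)} = \Omega(1)$, and that recruitment succeeds with constant probability), so after $\BO(\log n)$ further rounds all passive ants are absorbed. To make these statements hold with high probability I would reuse the domination-by-independent-Bernoulli trick from the proofs of Lemma~\ref{lem:small_nest} and the lower bound (the relevant expectations are $\Omega(\log n)$ throughout this phase, since $c^2/n \geq n/(4k^2) = \Omega(\log n)$), taking a union bound over the $\BO(k\log n)$ rounds.

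Finally, a union bound over all the failure events — the round-$1$ Chernoff bound, the $\binom{k}{2}$ applications of Lemma~\ref{lem:two-nests-drop}, and the $\BO(k\log n)$ per-round concentration bounds of the second phase, each taken to fail with probability at most $1/n^{c+1}$ — yields overall success probability at least $1-1/n^c$, with total running time $\BO(k\log n) + \BO(k\log n) = \BO(k\log n)$. I expect the main obstacle to be the absorption phase: none of the stated lemmas directly bounds how fast the unique surviving nest swallows the passive ants, so this step requires re-deriving the single-nest growth rate, re-handling the recruitment dependencies, and controlling a union bound over a super-logarithmic number of rounds; a secondary subtlety is ensuring the winning nest is not pathologically small when the competition ends, which is exactly why the monotonicity of the active-ant count is invoked.
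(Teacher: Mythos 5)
Your first phase coincides with the paper's proof: a Chernoff bound on the initial search, Lemma~\ref{lem:two-nests-drop} applied to each of the $\binom{k}{2}$ pairs of good nests, and a union bound using $k^2 < n$ to conclude that at most one good nest is non-empty after $\BO(k\log n)$ rounds. Where you diverge is that the paper stops essentially there: it asserts that pairwise emptiness forces either all nests empty (impossible, since some ant is always committed to a good nest) or all ants located at one good nest, and concludes. You instead observe, correctly, that this dichotomy does not by itself account for the passive ants still waiting at the home nest, and you add an explicit absorption phase: the unique surviving nest $n_w$ holds at least $n/(2k)$ active ants (using the monotonicity of the active count, which indeed holds for Algorithm~\ref{algo:simple} though not for Algorithm~\ref{algo:optimal}), its population grows by a factor $1+\Omega\bigl((c/n)(1-c/n)\bigr)$ per recruitment round, giving $\BO(k\log n)$ rounds to reach $n/2$ and $\BO(\log n)$ further rounds to absorb the rest, with concentration via the domination lemma since the relevant expectations are $\Omega(\log n)$ under the standing bound on $k$. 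This growth-rate calculation is sound (it is essentially a one-nest specialization of Lemmas~\ref{lem:exp-x} and \ref{lem:expected_change} combined with the $(n-c)/n$ fraction of passive targets), and it makes your write-up more complete than the paper's on exactly the point you flag as the main obstacle: none of the paper's lemmas bounds how fast the winner swallows the passive population, and the paper's one-sentence conclusion glosses over it. The only points to tighten are minor: the success probability of Lemma~\ref{lem:recruit-succeed} and the event of selecting a passive target are not independent, so the per-round expected gain should be derived in one step rather than by multiplying the two constants; and the per-round high-probability multiplicative increase needs the lower-tail domination argument spelled out before the union bound over $\BO(k\log n)$ rounds. Neither affects the asymptotics or the correctness of the plan.
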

\begin{proof}
By our bound on $k$ ($k \leq \sqrt{n/(8 d^2 (c+6) \log n)}  = \BO(\sqrt{n/\log n})$), with probability at least $1 - 1/n^{c+1}$, in the first round of the algorithm (a round of searching), at least some ant will arrive at a nest with quality $1$. Further, since only ants at nests with quality $1$ choose to recruit, at any point, at least one ant is committed to a good nest.

By Lemma \ref{lem:two-nests-drop}, with probability at least $1 - 1/n^{c+2}$, for each pair of nests $n_i$ and $n_j$, at least one nest contains no ants by the end of $\BO(k\log n)$ rounds. Union bounding over all, at most $\binom{k}{2} < k^2 < n$ (by the bound on $k$), pairs of good nests, we conclude that, with probability at least $1 - 1/n^{c+1}$, after $\BO(k \log n)$ rounds, each pair contains at least one nest with no ants. This can only be true if all nests have no ants (not possible) or if all ants are located at one good nest. 

Finally, union bounding the initial search phase and the subsequent competition between nests, we get that with probability at least $1 - 1/n^c$, the house hunting problem is solved in $\BO(k\log n)$ rounds.
\end{proof}

\section{Discussion and Future Work}
\label{sec:discussion}
	
	\paragraph{Extensions to the Model}
	
	For the sake of analysis, we have made many simplifying assumptions about the house-hunting process. We are confident that many of these assumptions can be weakened to make the model more realistic and natural. Some obvious modifications include assuming ants know only an approximation of $n$, allowing values of $k$ larger than $\BO(n/\log n)$, and allowing non-binary nest qualities, variability in the ants' quality sensing, along with some measure of algorithmic performance based on the quality of the chosen nest. Distinguishing between direct transport and tandem runs may also be interesting, paired with a more fine-grained runtime analysis. 
	
	Additionally, real ants can only assess nest quality and population approximately. For example, they may estimate nest size (one measure of quality)  by randomly walking within the nest and counting how many times they cross their previous path \cite{mallon2000ants}. They seem to estimate nest population by measuring encounter rates with other ants, with a higher encounter rate indicated a higher population at the nest \cite{gordon2010ant,pratt2010nest}. Adding noisy measurements to our model and designing algorithms that handle this noise would be a very interesting future direction. It may even be possible to explicitly model lower level behavior and implement subroutines for nest assessment, population measurement, recruitment, and search which give various runtime and error guarantees.
	
	\paragraph{Extensions to the Algorithms}
	
	We believe that Algorithm \ref{algo:simple} may be a good starting point for work on more realistic house-hunting models. Below we discuss some interesting possible extensions to the algorithm. Some seem to simply require a more involved analysis, while others seem to require trade-offs in the algorithm's running time and its level of simplicity.
	
	\textbf{Improved running time: } The $\BO(k \log n)$ runtime of Algorithm \ref{algo:simple} is required because, on average each nest initially contains $n/k$ ants, so ants only recruit with probability $1/k$. $\BO(k)$ time is required to amplify population gaps by a constant factor. Ideally, ants would all recruit with a probability lower bounded by a constant, but still linearly dependent on the nest populations. This would allow convergence in $\BO(\log n)$ rounds. If ants keep track of the round number, they can map this to an estimate $\tilde k(r)$ of how many competing nests remain, allowing them to recruit at rate $\BO(c(i,r)/n \cdot \tilde k(r))$. We believe that such a strategy should yield a relatively natural algorithm converging in $\BO(\log^c n)$ rounds.
	
	\textbf{Non-binary nest qualities: } Assuming a real-valued nest quality in the range $(0,1)$ affects the correctness  of Algorithm \ref{algo:simple} because ants no longer have the notion of a good nest. However, it should be possible to incorporate the quality of the nest into the recruitment probability in order make the algorithm converge to a high-quality nest, without significantly effecting runtime.
	
	\textbf{Approximate counting, nest assessment, and knowledge of $\bv{n}$: } Since the analysis of Algorithm  \ref{algo:simple} does not require each ant to recruit with a specific probability, but rather that the total number of ants choosing to recruit from a nest is proportional to its population, it should be resilient to noisy quality and population measurements. As long as ants have unbiased estimators of these values, we believe that Algorithm \ref{algo:simple} should remain correct, perhaps with some runtime cost dependent on estimator variance.
	
	\textbf{Fault tolerance: } Similarly, Algorithm \ref{algo:simple} should support some degree of fault tolerance. A small number of ants suffering from crash-faults or even malicious faults, should not affect the overall populations of recruiting ants and the algorithm's performance.
		
	\textbf{Asynchrony: } Finally, note that Algorithm \ref{algo:simple} currently works in synchronous rounds and relies on that assumption to get the correct number of ants at a given nest. However, we believe that, as long as the distribution of ants in candidate nests throughout time stays close to the distribution in the synchronous model, Algorithm \ref{algo:simple} can be extended to work in a partially-synchronous model, potentially at the cost of some extra running time.

\bibliographystyle{plain}
\bibliography{references}

\begin{thebibliography}{10}

\bibitem{astrom2010feedback}
Karl~Johan Astr{\"o}m and Richard~M Murray.
\newblock {\em Feedback systems: an introduction for scientists and engineers}.
\newblock Princeton university press, 2010.

\bibitem{chung2003generalizations}
Fan Chung, Shirin Handjani, and Doug Jungreis.
\newblock Generalizations of {P}olya's urn problem.
\newblock {\em Annals of combinatorics}, 7(2):141--153, 2003.

\bibitem{cornejo2014task}
Alejandro Cornejo, Anna Dornhaus, Nancy Lynch, and Radhika Nagpal.
\newblock Task allocation in ant colonies.
\newblock In {\em Distributed Computing}, pages 46--60. Springer, 2014.

\bibitem{cronin2013synergy}
Adam~L Cronin.
\newblock Synergy between pheromone trails and quorum thresholds underlies
  consensus decisions in the ant myrmecina nipponica.
\newblock {\em Behavioral Ecology and Sociobiology}, 67(10):1643--1651, 2013.

\bibitem{doerr2011theory}
Benjamin Doerr and Anne Auger.
\newblock {\em Theory of Randomized Search Heuristics}.
\newblock World Scientific, 2011.

\bibitem{dornhaus2006colony}
A~Dornhaus and NR~Franks.
\newblock Colony size affects collective decision-making in the ant temnothorax
  albipennis.
\newblock {\em Insectes sociaux}, 53(4):420--427, 2006.

\bibitem{emek2014solving}
Yuval Emek, Tobias Langner, Jara Uitto, and Roger Wattenhofer.
\newblock Solving the {ANTS} problem with asynchronous finite state machines.
\newblock In {\em Proceedings of the 41st International Colloquium on Automata,
  Languages, and Programming (ICALP)}, 2014.

\bibitem{feinerman2013efficient}
Ofer Feinerman, Bernhard Haeupler, and Amos Korman.
\newblock Efficient and noise-resilient rumor-spreading in large, anonymous
  populations.
\newblock {\em arXiv preprint arXiv:1311.3425}, 2013.

\bibitem{feinerman2012memory}
Ofer Feinerman and Amos Korman.
\newblock Memory lower bounds for randomized collaborative search and
  implications for biology.
\newblock In {\em Distributed Computing}, pages 61--75. Springer, 2012.

\bibitem{feinerman2013theoretical}
Ofer Feinerman and Amos Korman.
\newblock Theoretical distributed computing meets biology: A review.
\newblock In {\em Distributed Computing and Internet Technology}, pages 1--18.
  Springer, 2013.

\bibitem{feinerman2012collaborative}
Ofer Feinerman, Amos Korman, Zvi Lotker, and Jean-S{\'e}bastien Sereni.
\newblock Collaborative search on the plane without communication.
\newblock In {\em Proceedings of the 2012 ACM symposium on Principles of
  distributed computing}, pages 77--86. ACM, 2012.

\bibitem{fischer1985impossibility}
Michael~J Fischer, Nancy~A Lynch, and Michael~S Paterson.
\newblock Impossibility of distributed consensus with one faulty process.
\newblock {\em Journal of the ACM (JACM)}, 32(2):374--382, 1985.

\bibitem{franks2006decision}
Nigel~R. Franks, Anna Dornhaus, Charlotte~S. Best, and Elizabeth~L. Jones.
\newblock Decision making by small and large house-hunting ant colonies: one
  size fits all.
\newblock {\em Animal Behaviour}, 72(3):611 -- 616, 2006.

\bibitem{gordon2010ant}
Deborah~M Gordon.
\newblock {\em Ant encounters: interaction networks and colony behavior}.
\newblock Princeton University Press, 2010.

\bibitem{healey2008effect}
Christiane~IM Healey and Stephen~C Pratt.
\newblock The effect of prior experience on nest site evaluation by the ant
  {T}emnothorax curvispinosus.
\newblock {\em Animal Behaviour}, 76(3):893--899, 2008.

\bibitem{karp2000randomized}
Richard Karp, Christian Schindelhauer, Scott Shenker, and Berthold Vocking.
\newblock Randomized rumor spreading.
\newblock In {\em Foundations of Computer Science, 2000. Proceedings. 41st
  Annual Symposium on}, pages 565--574. IEEE, 2000.

\bibitem{lamport1998part}
Leslie Lamport.
\newblock The part-time parliament.
\newblock {\em ACM Transactions on Computer Systems (TOCS)}, 16(2):133--169,
  1998.

\bibitem{lenzen2014trade}
Christoph Lenzen, Nancy Lynch, Calvin Newport, and Tsvetomira Radeva.
\newblock Trade-offs between selection complexity and performance when
  searching the plane without communication.
\newblock In {\em Proceedings of the 2014 ACM symposium on Principles of
  distributed computing}, 2014.

\bibitem{mallon2001individual}
E~Mallon, S~Pratt, and N~Franks.
\newblock Individual and collective decision-making during nest site selection
  by the ant {L}eptothorax albipennis.
\newblock {\em Behavioral Ecology and Sociobiology}, 50(4):352--359, 2001.

\bibitem{mallon2000ants}
Eamonn~B Mallon and Nigel~R Franks.
\newblock Ants estimate area using {B}uffon's needle.
\newblock {\em Proceedings of the Royal Society of London. Series B: Biological
  Sciences}, 267(1445):765--770, 2000.

\bibitem{pratt2010nest}
S.C. Pratt.
\newblock Nest site choice in social insects.
\newblock In Michael~D. Breed and Janice Moore, editors, {\em Encyclopedia of
  Animal Behavior}, pages 534 -- 540. Academic Press, Oxford, 2010.

\bibitem{pratt2005quorum}
Stephen~C Pratt.
\newblock Quorum sensing by encounter rates in the ant {T}emnothorax
  albipennis.
\newblock {\em Behavioral Ecology}, 16(2):488--496, 2005.

\bibitem{pratt2002quorum}
Stephen~C Pratt, Eamonn~B Mallon, David~J Sumpter, and Nigel~R Franks.
\newblock Quorum sensing, recruitment, and collective decision-making during
  colony emigration by the ant {L}eptothorax albipennis.
\newblock {\em Behavioral Ecology and Sociobiology}, 52(2):117--127, 2002.

\bibitem{pratt2006tunable}
Stephen~C Pratt and David~JT Sumpter.
\newblock A tunable algorithm for collective decision-making.
\newblock {\em Proceedings of the National Academy of Sciences},
  103(43):15906--15910, 2006.

\bibitem{sasaki2011emergence}
Takao Sasaki and Stephen~C Pratt.
\newblock Emergence of group rationality from irrational individuals.
\newblock {\em Behavioral Ecology}, 22(2):276--281, 2011.

\bibitem{sasaki2013ants}
Takao Sasaki and Stephen~C Pratt.
\newblock Ants learn to rely on more informative attributes during
  decision-making.
\newblock {\em Biology letters}, 9(6):20130667, 2013.

\end{thebibliography}

\appendix

\section{Math Preliminaries}

\begin{lemma}
\label{lem:dependent-chernoff}
	Let $X_1, \cdots, X_n$ be arbitrary binary random variables. Let $X^*_1, \cdots, X^*_n$ be random variables that are mutually independent and such that for all $i$, $X^*_i$ is independent of $X_1, \cdots, X_{i-1}$. Assume that for all $i$ and all $x_1, \cdots, x_{i-1} \in \{0,1\}$,
	
\begin{equation*}	
	 P[X_i = 1 | X_1 = x_1, \cdots, X_{i-1} = x_{i-1}] \geq P[X^*_i = 1]
\end{equation*}

Then, for all $k \geq 0$, we have,

\begin{equation*}
	P \left[ \sum_{i=1}^n X_i < k \right] \leq P \left[ \sum_{i=1}^n X^*_i < k \right],
\end{equation*}

and the latter term can be bounded by Chernoff bounds for independent random variables.
\end{lemma}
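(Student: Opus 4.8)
The plan is to prove the domination by an explicit coupling: I will build, on a single probability space, random vectors $(Y_1,\dots,Y_n)$ and $(Y^*_1,\dots,Y^*_n)$ that have the same joint laws as $(X_1,\dots,X_n)$ and $(X^*_1,\dots,X^*_n)$ respectively, but with $Y_i \ge Y^*_i$ holding pointwise for every $i$. Given such a coupling, $\sum_{i=1}^n Y_i \ge \sum_{i=1}^n Y^*_i$ everywhere, so the event $\{\sum_i Y_i < k\}$ is contained in $\{\sum_i Y^*_i < k\}$, and passing to probabilities gives $P[\sum_i X_i < k] = P[\sum_i Y_i < k] \le P[\sum_i Y^*_i < k] = P[\sum_i X^*_i < k]$ for every $k$. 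The closing sentence of the statement is then immediate: since $X^*_1,\dots,X^*_n$ are mutually independent, the usual Chernoff bounds apply to $\sum_i X^*_i$.

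To construct the coupling, take i.i.d.\ random variables $U_1,\dots,U_n$, each uniform on $[0,1]$. Set $Y^*_i := \mathbf{1}[U_i \le P[X^*_i = 1]]$; these are independent Bernoulli variables with the correct parameters, so $(Y^*_1,\dots,Y^*_n)$ has the law of $(X^*_1,\dots,X^*_n)$. Define $Y_i$ recursively by $Y_i := \mathbf{1}[U_i \le q_i]$, where $q_i := P[X_i = 1 \mid X_1 = Y_1,\dots,X_{i-1} = Y_{i-1}]$ is a deterministic function of $Y_1,\dots,Y_{i-1}$, hence measurable with respect to $U_1,\dots,U_{i-1}$ and independent of $U_i$ (handling of probability-zero prefixes is routine: the corresponding conditional probability is then $0$ or $1$, which forces $Y_i$ to the only value consistent with $X$). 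Two facts finish the argument. First, $Y_i \ge Y^*_i$ pointwise, because the hypothesis gives $q_i \ge P[X^*_i = 1]$, so $U_i \le P[X^*_i = 1]$ forces $U_i \le q_i$. Second, $(Y_1,\dots,Y_n)$ has the same joint law as $(X_1,\dots,X_n)$: conditioning on $Y_1 = y_1,\dots,Y_{i-1} = y_{i-1}$ and using that $U_i$ is uniform and independent of $(Y_1,\dots,Y_{i-1})$ yields $P[Y_i = 1 \mid Y_1 = y_1,\dots,Y_{i-1} = y_{i-1}] = q_i = P[X_i = 1 \mid X_1 = y_1,\dots,X_{i-1} = y_{i-1}]$, and multiplying these conditionals (the chain rule) reconstructs the joint distribution of $(X_1,\dots,X_n)$.

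I expect the only real obstacle to be bookkeeping in the joint-law verification — carefully checking that the recursively defined $Y_i$ reproduces the conditional structure of the $X_i$ while simultaneously respecting the pointwise domination, and dealing with the technicality of conditioning on possibly null prefixes. Note that this argument uses only the marginal laws of the two families plus the assumed conditional inequalities; the hypothesis that $X^*_i$ is independent of $X_1,\dots,X_{i-1}$ is not actually needed here. As an alternative to the coupling, one can argue by induction on $n$: condition on $X_1$ and $X^*_1$, apply the inductive hypothesis to the still-hypothesis-compliant conditioned families $X_2,\dots,X_n \mid X_1 = x_1$, and then use that $q \mapsto q A + (1-q) B$ is non-increasing in $q$ whenever $A \le B$ (here $A = P[\sum_{i\ge 2} X^*_i < k-1] \le B = P[\sum_{i \ge 2} X^*_i < k]$) together with $P[X_1 = 1] \ge P[X^*_1 = 1]$; independence of $X^*_1$ from $X^*_2,\dots,X^*_n$ then reassembles the bound into $P[\sum_i X^*_i < k]$.
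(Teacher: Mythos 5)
Your coupling proof is correct. Note that the paper itself gives no proof of this lemma; it is imported verbatim as Lemma~1.18 of the cited Doerr chapter, and the argument there is the same monotone coupling you construct (realize both sequences from common uniforms $U_i$, thresholding at $q_i$ versus $P[X^*_i=1]$, so that domination holds pointwise and the chain rule recovers the joint law of the $X_i$). Your two side remarks are also accurate: null prefixes occur with probability zero under the coupled law and so are harmless, and the hypothesis that $X^*_i$ is independent of $X_1,\dots,X_{i-1}$ is never used in the domination step -- only the mutual independence of the $X^*_i$ is needed, and only to invoke the Chernoff bound at the end. The alternative induction you sketch (condition on $X_1$, use monotonicity of $q\mapsto qA+(1-q)B$ for $A\le B$) is an equally valid, coupling-free route to the same inequality.
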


\end{document}